\documentclass[letterpaper,10pt,conference]{ieeeconf}

\IEEEoverridecommandlockouts                              
\usepackage{graphicx} 
\usepackage{xcolor}
\usepackage{amsfonts}
\usepackage{amssymb}
\usepackage{graphicx}		
\usepackage{empheq}
\usepackage{tikz}
\usepackage{comment}
\usetikzlibrary{arrows.meta,positioning,calc}

\definecolor{plantblue}{RGB}{224,236,248}
\definecolor{controlorange}{RGB}{252,233,214}
\definecolor{sumfill}{RGB}{250,250,250}

\usepackage{pgfplots}       
\pgfplotsset{compat=1.18}
\usepackage[table]{xcolor}	
\let\labelindent\relax
\usepackage{enumitem}

\usepackage{amsmath}
\newcommand{\He}{\operatorname{He}}
\usepackage{amssymb}		
\usepackage[hidelinks]{hyperref}

\newtheorem{theorem}{Theorem}
\newtheorem{definition}{Definition}
\newtheorem{lemma}{Lemma}
\newtheorem{proposition}{Proposition}
\newtheorem{corollary}{Corollary}
\newtheorem{assumption}{Assumption}

\newcommand{\col}[1]{\operatorname{col}\!\left(#1\right)}

\usepackage{tikz}
\usetikzlibrary{arrows.meta,positioning,calc}

\definecolor{gridblue}{RGB}{63,96,125}
\definecolor{softblue}{RGB}{240,247,252}
\definecolor{boxblue}{RGB}{220,235,247}
\definecolor{accentteal}{RGB}{72,140,145}
\definecolor{textdark}{RGB}{35,35,35}
\definecolor{plantblue}{RGB}{222,235,247}
\definecolor{controlorange}{RGB}{252,232,206}
\definecolor{sumgray}{RGB}{245,245,245}
\definecolor{gridblue}{RGB}{190,110,35}    
\definecolor{softblue}{RGB}{255,243,230}   
\definecolor{boxblue}{RGB}{255,224,196}    
\definecolor{accentteal}{RGB}{220,125,35}  
\definecolor{textdark}{RGB}{70,45,25}

\title{Safety Verification for Voltage Control in Inverter Dominated Radial Distribution Networks under Actuator and Sensor Adversarial Attacks}
\author{Georgios Pantazis, Revi Boelen, Michelle S. Chong 
\thanks{*This work is funded through the CETPartnership’s ProRES project under the Joint Call 2024, co‑funded by the European Commission (Grant Agreement No. 101069750) and the Dutch Research Council NWO (File No. EP.1602.24.001).} 
\thanks{The authors are with the Dynamics and Control (D\&C) section at the Department of Mechanical Engineering, Eindhoven University of Technology, the Netherlands. 
Email: {\tt\small r.e.boelen@student.tue.nl,
\{G.Pantazis33, M.S.T.Chong\}@tue.nl}.}%
\thanks{This work has been submitted to the IEEE for possible
publication. Copyright may be transferred without notice,
after which this version may no longer be accessible.}
}

\begin{document}

\maketitle

\begin{abstract}
This paper develops a safety verification framework for
inverter-based radial distribution networks under sensor
and actuator attacks. Each customer is equipped with a droop-based controller to regulate its voltage within a desirable bound of a reference, i.e., to achieve safety. In our setting, attacks and
disturbances enter the droop-based controllers, while
actuator attacks corrupt the reactive power
injected into the network. Consequently, invariance of
the controller state alone does not guarantee compliance
with inverter capacities or customer voltage limits. To address this challenge,  we derive  conditions to guarantee that the reactive power state remains inside a predefined ellipsoid, while at the same time ensuring that the attacked reactive power injections
and customer voltages respect predetermined operating limits that hold uniformly for the entire predefined ellipsoid. Together, these conditions
guarantee safety under all admissible disturbances and  state-dependent sensor and actuator attacks.
Finally, we express the resulting containment condition as a linear matrix inequality (LMI) feasibility problem. A five-customer radial 
network based on the CIGRE benchmark corroborates our theoretical developments in simulations.
\end{abstract}

\section{Introduction}

High penetration of renewable energy generation by prosumers can
cause voltage deviations in power distribution systems, rendering the need for effective voltage regulation increasingly important. Local
inverter-based droop control, a form of Volt/VAR control, is a widely
studied approach to this problem in radial distribution networks \cite{baker2018,singhal2019}.
Each inverter adjusts its reactive power injection according to a
local droop function relating the measured voltage deviation
to a reactive power command with the aim of controlling the voltage. However, corrupted measurements can alter the droop controller response,
while actuator attacks can modify the reactive power
injected into the network. Safe operation therefore requires
guarantees on both customer voltages and inverter injections
under these attacks.
The DistFlow equations introduced in \cite{Baran_Wu} are used for studies of voltage regulation in radial distribution
networks, including local Volt/VAR control
\cite{farivar2013} and approaches based on communication
\cite{bolognani2019need,magnusson2020distributed}.
A broader overview is provided in
\cite{srivastava2023voltage}.
Voltage bounds under customer disturbances
are studied in \cite{chong2019}, which considers inverter-based radial
distribution networks with slope-restricted nonlinear droop controllers
and derives sufficient conditions for customer voltages to remain
within prescribed limits.  This is extended in~\cite{Chong2020} to account for a broader collection of sector-bounded droop controllers. More recently, \cite{pantazis2026heterogeneous} derives deterministic voltage certificates that retain the heterogeneity across connected prosumers and then exploit these certificates for the synthesis of a coordinated droop control architecture. However, the aforementioned works assume that the sensing and actuation channels used by the droop controllers are uncompromised.

\begin{figure}[t]
\centering
\begingroup

\def\prosumerrow#1#2#3#4#5{%
\begin{scope}[shift={(0,#2)}]

    \draw[dline, fill=softblue]
        (3.15,-0.58) rectangle (6.95,0.58);

    \node[
        anchor=south east,
        text=textdark
    ] at (6.95,0.68) {Customer $#1$};

    \node[box] (inv#1) at (4.40,0)
        {Inverter};

    \draw[line] (0,0) -- (inv#1.west);

    \draw[imp] (1.05,0) -- (1.68,0);

    \node[
        small,
        anchor=north,
        text=textdark
    ] at (1.365,-0.14) {$#3$};

    \draw[arr] (2.85,-0.27) -- (1.90,-0.27);

    \node[
        small,
        anchor=north,
        text=textdark
    ] at (2.375,-0.40) {$#5$};

    \node[
        small,
        anchor=south east,
        text=textdark
    ] at (3.05,0.05) {$#4$};

    \draw[line] (inv#1.east) -- (5.98,0);

    \draw[source] (6.20,0) circle (0.21);
    \node[tinylabel, text=textdark] at (6.20,0.11) {$+$};
    \node[tinylabel, text=textdark] at (6.20,-0.11) {$-$};

\end{scope}
}

\begin{tikzpicture}[
    scale=0.95,
    every node/.style={transform shape},
    font=\footnotesize,
    line/.style={
        draw=gridblue,
        line width=1.25pt
    },
    dline/.style={
        draw=gridblue,
        dashed,
        line width=1.0pt,
        rounded corners=1pt
    },
    arr/.style={
        draw=accentteal,
        ->,
        line width=1.1pt
    },
    imp/.style={
        draw=gridblue,
        line width=5.2pt,
        line cap=round
    },
    source/.style={
        draw=gridblue,
        line width=1.1pt,
        fill=white
    },
    box/.style={
        draw=gridblue,
        fill=boxblue,
        line width=1.0pt,
        rounded corners=2pt,
        minimum width=1.65cm,
        minimum height=0.78cm,
        align=center,
        text=textdark
    },
    small/.style={
        font=\scriptsize,
        inner sep=1pt
    },
    tinylabel/.style={
        font=\scriptsize,
        inner sep=0.5pt
    }
]

\draw[dline, fill=softblue]
    (-0.95,-0.50) rectangle (0.58,-6.85);

\draw[source, fill=white] (0,0) circle (0.34);

\draw[line]
    (-0.19,0)
    .. controls (-0.08,0.16) and (0.08,-0.16) ..
    (0.19,0);

\node[
    anchor=west,
    text=textdark
] at (0.56,0) {Substation};

\draw[line] (0,-0.34) -- (0,-3.35);

\draw[line] (0,-5.15) -- (0,-6.45);

\fill[gridblue] (0,-4.10) circle (1.25pt);
\fill[gridblue] (0,-4.38) circle (1.25pt);
\fill[gridblue] (0,-4.66) circle (1.25pt);

\node[
    small,
    anchor=east,
    text=textdark
] at (-0.10,-0.82) {$\hat v_0$};

\node[
    small,
    anchor=east,
    text=textdark
] at (-0.10,-1.70) {$\hat v_1$};

\node[
    small,
    anchor=east,
    text=textdark
] at (-0.10,-3.35) {$\hat v_2$};

\node[
    small,
    anchor=east,
    text=textdark
] at (-0.10,-5.35) {$\hat v_{N-1}$};

\node[
    small,
    anchor=east,
    text=textdark
] at (-0.10,-6.45) {$\hat v_N$};

\draw[imp] (0,-1.05) -- (0,-1.34);

\node[
    small,
    anchor=east,
    text=textdark
] at (-0.10,-1.195) {$Z_0$};

\draw[imp] (0,-2.25) -- (0,-2.56);

\node[
    small,
    anchor=east,
    text=textdark
] at (-0.10,-2.405) {$Z_1$};

\draw[imp] (0,-5.72) -- (0,-6.02);

\node[
    small,
    anchor=east,
    text=textdark
] at (-0.10,-5.87) {$Z_{N-1}$};

\draw[arr] (0.48,-0.98) -- (0.48,-1.48);

\node[
    small,
    anchor=west,
    text=textdark
] at (0.64,-1.23) {$P_0+jQ_0$};

\draw[arr] (0.48,-2.22) -- (0.48,-2.82);

\node[
    small,
    anchor=west,
    text=textdark
] at (0.64,-2.52) {$P_1+jQ_1$};

\draw[arr] (0.48,-5.68) -- (0.48,-6.22);

\node[
    small,
    anchor=west,
    text=textdark
] at (0.64,-5.95) {$P_{N-1}+jQ_{N-1}$};

\fill[gridblue] (0,-0.82) circle (1.6pt);

\fill[gridblue] (0,-1.70) circle (1.6pt);
\fill[gridblue] (0,-3.35) circle (1.6pt);

\fill[gridblue] (0,-5.35) circle (1.6pt);

\fill[gridblue] (0,-6.45) circle (1.6pt);

\prosumerrow{1}{-1.70}{Z'_0}{v_1}{\rho_1+jq_1}

\prosumerrow{2}{-3.35}{Z'_1}{v_2}{\rho_2+jq_2}

\node[text=textdark] at (4.85,-4.75) {$\vdots$};

\prosumerrow{N}{-6.45}{Z'_{N-1}}{v_N}{\rho_N+jq_N}

\draw[
    draw=gridblue,
    fill=softblue,
    line width=0.9pt,
    rounded corners=2pt
] (1.45,-9.05) rectangle (6.55,-7.20);

\node[
    anchor=west,
    text=textdark
] at (1.72,-7.42) {Legend:};

\draw[imp] (2.05,-7.85) -- (2.68,-7.85);

\node[
    small,
    anchor=west,
    text=textdark
] at (3.02,-7.85) {Line impedance};

\draw[source] (2.36,-8.30) circle (0.21);

\node[tinylabel, text=textdark]
    at (2.36,-8.19) {$+$};

\node[tinylabel, text=textdark]
    at (2.36,-8.41) {$-$};

\node[
    small,
    anchor=west,
    text=textdark
] at (3.02,-8.30) {DC voltage source};

\draw[source] (2.36,-8.70) circle (0.21);

\draw[
    draw=gridblue,
    line width=0.9pt
]
    (2.21,-8.70)
    .. controls (2.27,-8.57) and (2.45,-8.83) ..
    (2.51,-8.70);

\node[
    small,
    anchor=west,
    text=textdark
] at (3.02,-8.70) {AC voltage source};

\end{tikzpicture}
\endgroup

\caption{Radial network topology with $N$ prosumers.
$P_i$ and $Q_i$ denote the active and reactive power flows
along feeder.
The net active/reactive customer injections are
$\rho_i=\rho_{g,i}-\rho_{c,i}$ and
$q_i=q_{g,i}-q_{c,i}$, respectively. Voltage and impedance notation is defined in Section~II-A.}
\label{fig:lv-distribution-grid}
\end{figure}

Cyberattacks affecting voltage regulation in distribution networks have been considered from many different perspectives \cite{wang2025sensor,pan2025,dimoulias2025}. The authors in \cite{Ma2017} consider falsification of the voltage measurements supplied to the droop controllers and characterize how such attacks affect stability and the customer voltage magnitudes. Their analysis demonstrates that corrupted local voltage measurements can propagate through the network and adversely affect buses other than the directly attacked one, but does not seek a safety certificate that holds at all times under a considered set of admissible attacks. The authors in \cite{Choeum2019} consider false data injection attacks against Volt/VAR control, where manipulated smart meter readings are designed to alter the voltages of the feeder towards abnormal operating conditions. Their aim is to construct effective stealthy attacks  rather than certify safety of the closed-loop system under all admissible attacks. The work ~\cite{liu2024cfdi} considers a setting where coordinated false data injection on both measurement data and control commands can occur and, taking the role of the adversary, determine attack strategies that increase network voltage deviations. 

Our contributions compared to the related literature are the following:

\begin{enumerate}
    \item The closest approaches to our work~\cite{chong2019,pantazis2026heterogeneous} provide guarantees under bounded customer disturbances originating from consumption and renewable energy generation. However, both works assume uncompromised sensing and actuation, while existing cyberattack studies~\cite{Ma2017} do not provide robust safety certificates tailored for nonlinear droop-controlled radial networks. We derive sufficient  conditions that jointly certify
customer voltage limits and inverter capacities under
simultaneous sensor and actuator attacks, while retaining
individual customer disturbance bounds and accounting
for deviations of the feeder voltage from the reference.
    \item Though our analysis builds on the invariant set construction and the so-called $\mathcal{S}$-procedure methodology leveraged in \cite{lin2026}, it addresses challenges that fall outside the formulation considered therein. Specifically, in our setting, sensor and actuator attacks, together with exogenous disturbances, enter directly into the nonlinear droop argument, while the physical quantities to be kept safe, i.e., the attacked inverter commands and customer voltages, depend explicitly on these signals, so their operating
limits cannot be guaranteed from state invariance alone. To address this, we derive a robust positive invariance condition for the reactive power state set, together with
new containment conditions that exploit the power distribution network model to certify inverter capacity and individual customer voltage limits for all admissible
attacks and disturbances. We then show that the obtained certificate reduces to a linear matrix inequality (LMI) feasibility problem for each fixed value of a single scalar parameter.
\end{enumerate}
\emph{Organization: }Section~II introduces the radial distribution network model,
the local droop control and attack models required to formulate the safety verification problem.
Section~III derives sufficient conditions for robust positive
invariance and containment within the operating limits, and reformulates these conditions as an LMI
feasibility problem.
Section~IV illustrates the proposed framework through numerical
simulations on a five-customer network.
Finally, Section~V concludes the paper and proposes further research directions.

\section{Problem Formulation}
\label{sec:problem}

\emph{Notation:} Let \(\mathbb R^N_{\ge 0}\) and \(\mathbb R^N_{>0}\) denote the
nonnegative and positive orthants of \(\mathbb R^N\), respectively.
 The vector of all
ones is denoted by \(\mathbf 1\), and \(\operatorname{diag}(x)\)
denotes the diagonal matrix with diagonal entries given by the vector
\(x\).  \(A_{ij}\) denotes the \((i,j)\)-th entry of matrix $A$.  For a symmetric matrix \(A\), the
relations \(A\succ0\), \(A\succeq0\), \(A\prec0\), and \(A\preceq0\)
denote positive definite, positive semidefinite, negative definite, and
negative semidefinite, respectively. For matrices of compatible dimensions, denote $\text{He}(X):=X+X^\top.$
The operator $\col{x_1,\ldots,x_m}$ denotes vertical
concatenation of its arguments, while the $i$-th canonical basis
vector of $\mathbb R^N$ is denoted by $e_i$. Finally, let $e\in\mathbb R^{5N+1}$
denote the last canonical basis vector, i.e.,
    $e:=\col{0_N,0_N,0_{2N},0_N,1}$. For a set \(\mathcal S\subseteq\mathbb R^N\), its boundary is denoted by
\(\partial\mathcal S\). Expressions containing $\pm$ are imposed for
both signs, with separate decision variables.

\subsection{ Radial Network Model and Nominal Droop Control} \label{sec:problem.A}

We consider a model of a radial low voltage distribution grid with $N$ customers, as shown in
Figure~1, based on the DistFlow equations \cite{Baran_Wu}. Each customer $i \in \mathcal{N}=\{1, \dots, N\}$ has an inverter which can inject active and reactive power $\rho_{g, i}$ and $q_{g,i}$, respectively. 
Let $v_i\geq0$ denote the voltage magnitude at customer $i$,
and $\hat v_i\geq0$ the voltage magnitude at the corresponding
connection point on the feeder.  The feeder segment between connection points \(i\) and \(i+1\) has impedance $Z_i = R_i + j X_i, i \in  \mathcal{N}_{-}=\{0, \dots, N-1\}$, while the impedance of the line between customer $i$ and its connection point is $Z'_{i-1} = R'_{i-1} + j X'_{i-1}, i \in \mathcal{N}$. Let $\rho_{c,i}$ and $q_{c,i}$ denote the active and
reactive power consumption of customer $i$, respectively,
and define the net active power injection
$\rho_i:=\rho_{g,i}-\rho_{c,i}$.
For a reference voltage $\bar v>0$, we consider the squared voltage differences $y_i:=\bar v^2-v_i^2$
    and $y_0:=\bar v^2-\hat v_0^2$,
where $\hat v_0$ is the voltage at the head of the feeder. The $i$-th droop controller uses $y_i$ to adjust the reactive
power command $q_{g,i}$ and regulate the $i$-th customer's voltage $v_i$. For compactness, we define the vectors $q_g:=\operatorname{col}(q_{g,1},\ldots,q_{g,N})$,
$\rho:=\operatorname{col}(\rho_1,\ldots,\rho_N)$,
$q_c:=\operatorname{col}(q_{c,1},\ldots,q_{c,N})$, and
$y:=\operatorname{col}(y_1,\ldots,y_N) \in \mathbb R^N$. Following~\cite{chong2019,    pantazis2026heterogeneous}, the linearized
DistFlow equations  yield for each customer $i \in \mathcal{N}$:
\begin{align}
    y_i=(Hq_g)_i+[\phi(\rho,q_c)]_i+y_0,
\end{align}
where $(Hq_g)_i \in \mathbb{R}$ denotes the $i$-th component of the vector $Hq_g \in \mathbb{R}^N$
and 
 $H=(h_{ij})_{i,j\in\mathcal N}$ is defined as:
\begin{align}
H
=
-2
\begin{bmatrix}
X_0 & X_0 & \cdots & X_0 \\
X_0 & X_0+X_1 & \cdots & X_0+X_1 \\
\vdots & \vdots & \ddots & \vdots \\
X_0 & X_0+X_1 & \cdots & \displaystyle\sum_{k=0}^{N-1}X_k \nonumber 
\end{bmatrix} \\
-
2\operatorname{diag}(X'_0,\dots,X'_{N-1}), \nonumber 
\end{align}
while \(\phi(\rho,q_c)\) collects all contributions of net active power injections and consumptions. As derived in \cite{chong2019}, the $i$-th component of its analytical form is
\begin{align}
    [\phi(\rho,q_c)]_i
    =
    \sum_{k=0}^{i-1}
    \left(
        2X_k \sum_{j=k+1}^{N} q_{c,j}
        -
        2R_k \sum_{j=k+1}^{N} \rho_j
    \right) \nonumber \\
    -2R'_{i-1}\rho_i
+2X'_{i-1}q_{c,i}. \nonumber 
\end{align}
Thus, the state and output dynamics can be written in compact form as follows:
\begin{subequations}
\begin{empheq}[left=\empheqlbrace]{align}
\dot q_g &= -A q_g + A K(y),
\label{state_dynamics}\\
y &= Hq_g + \phi(\rho,q_c) + y_0\mathbf{1}.
\label{output_dynamics}
\end{empheq}
\end{subequations}
where  $K(y) := (K_1(y_1),\dots,K_N(y_N))^\top$,  $A=\text{diag}(\tau_1^{-1},\ldots,\tau_N^{-1}) \succ 0$, with $\tau_i > 0$ being the time constant of each inverter $i$. 
We impose the following assumption. \\
\begin{assumption} \label{assum:sector-bounded}  The nonlinear droop function $K(y)=(K_1(y_1), \dots, K_N(y_N))^\top$ satisfies 
\begin{align}
\frac{K_i(w)-K_i(v)}{w-v} \in [0, d_i], \nonumber 
\end{align}
for all $w,v \in \mathbb{R}$ with $w \neq v$ 
and $K_i(0)=0$ for each $i \in \mathcal{N}$. Furthermore, the bound $|K_i(v)| \leq \bar{K}_i$ holds  for all $v \in \mathbb{R}$ and each $i \in \mathcal{N}$. \hfill $\square$
\end{assumption}
Assumption \ref{assum:sector-bounded} is standard in droop  control implementations.  A commonly used example satisfying Assumption~\ref{assum:sector-bounded} is the piecewise-affine droop function used in
\cite{chong2019,zhou2021reverse}. Similar to \cite{chong2019}, we choose the design parameters such that $y_{\min,i}<y_{m,i}\leq 0\leq y_{n,i}<y_{\max,i}$.
The droop function is then defined as
\begin{align}
K_i(y)=
\begin{cases}
-\bar K_i,
    & y\leq y_{\min,i},\\[1mm]
-\left(
1-\dfrac{y-y_{\min,i}}
{y_{m,i}-y_{\min,i}}
\right)\bar K_i,
    & y_{\min,i}<y\leq y_{m,i},\\[3mm]
0,
    & y_{m,i}<y\leq y_{n,i},\\[1mm]
\dfrac{y-y_{n,i}}
{y_{\max,i}-y_{n,i}}\bar K_i,
    & y_{n,i}<y\leq y_{\max,i},\\[3mm]
\bar K_i,
    & y>y_{\max,i}.
\end{cases} \label{eq:droop_piecewise}
\end{align}
The interval $(y_{m,i},y_{n,i}]$ is the so-called dead zone, within which no
reactive power response is requested. Outside the dead zone, the droop
output varies affinely with the voltage error until it reaches the
saturation values $-\bar K_i$ and $\bar K_i$\footnote{When the controller is
implemented at customer $i$, the scalar argument $y$ is evaluated at
$y=y_i=\bar v^2-v_i^2$.}. We now impose the following assumptions: 
\begin{assumption}\label{ass:prosumer_bounds}
For each customer $i\in\mathcal N$, there exist constants $\bar\rho_i\geq 0$ and $\bar q_{c,i}\geq 0$ such that $|\rho_i(t)|\leq\bar\rho_i$, and 
    $|q_{c,i}(t)|\leq\bar q_{c,i}$, for all $t\geq 0$. \hfill $\square$
\end{assumption} 
\begin{assumption}\label{ass:head_voltage}
There exists a constant $\Delta_0\geq 0$ such that $|\hat v_0(t)-\bar v|\leq\Delta_0$ for all $t\geq 0$. \hfill $\square$
\end{assumption}
Under Assumption~\ref{ass:prosumer_bounds}, the following
result provides heterogeneous bounds on the exogenous
term $\phi(\rho,q_c)$.
\begin{lemma}[Adapted from {\cite[Lemma~1]{pantazis2026heterogeneous}}]
\label{lem:heterogeneous_disturbance}
Consider Assumption \ref{ass:prosumer_bounds}. Then,  for any $i \in \mathcal{N}$, it holds that $\bigl|[\phi(\rho(t),q_c(t))]_i\bigr|
    \leq \bar\phi_i$  for all $t \geq 0$
where
\begin{equation}\label{eq:phi_bound}
\begin{aligned}
    \bar\phi_i
    := {}&
    2\sum_{k=0}^{i-1}
    \left(
        R_k\sum_{j=k+1}^{N}\bar\rho_j
        +
        X_k\sum_{j=k+1}^{N}\bar q_{c,j}
    \right)\\
    &+2R_{i-1}'\bar\rho_i
     +2X_{i-1}'\bar q_{c,i}. 
\end{aligned}
\end{equation}
\end{lemma}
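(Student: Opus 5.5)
The bound follows from the explicit formula for $[\phi(\rho,q_c)]_i$ given above, the triangle inequality, and the pointwise bounds of Assumption~\ref{ass:prosumer_bounds}. The argument is pointwise in time. Fix $t\ge 0$ and $i\in\mathcal N$, and write $\rho_j=\rho_j(t)$ and $q_{c,j}=q_{c,j}(t)$. The line parameters satisfy $R_k,X_k,R'_{i-1},X'_{i-1}\ge 0$, since resistances and reactances of distribution lines are nonnegative. This is implicit in the model (for instance in $H$ having a nonpositive sign structure), and I will state it explicitly because the absolute values of the coefficients must equal the coefficients themselves.

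First, apply the triangle inequality to the expression for $[\phi]_i$:
\begin{align*}
\bigl|[\phi(\rho,q_c)]_i\bigr|
\le \sum_{k=0}^{i-1}\Bigl(2X_k\Bigl|\sum_{j=k+1}^{N}q_{c,j}\Bigr|+2R_k\Bigl|\sum_{j=k+1}^{N}\rho_j\Bigr|\Bigr)
+2R'_{i-1}|\rho_i|+2X'_{i-1}|q_{c,i}|.
\end{align*}
Here nonnegativity of the coefficients gives $|2X_k a|=2X_k|a|$, and similarly for the other terms. Second, apply the triangle inequality to the inner sums: $|\sum_{j=k+1}^N q_{c,j}|\le\sum_{j=k+1}^N|q_{c,j}|$, and likewise for $\rho$. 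Third, use Assumption~\ref{ass:prosumer_bounds}, i.e.\ $|\rho_j(t)|\le\bar\rho_j$ and $|q_{c,j}(t)|\le\bar q_{c,j}$, and then rearrange the terms. This yields exactly $\bar\phi_i$ in \eqref{eq:phi_bound}, with the $R_k$ and $X_k$ terms grouped inside the outer sum.

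Since $t$ and $i$ were arbitrary, the bound holds for all $t\ge0$ and every $i\in\mathcal N$. The only point needing care is the sign convention of the line parameters, which justifies dropping the absolute values on the coefficients. Everything else is routine. The feeder index convention (segments $k=0,\dots,i-1$ upstream of customer $i$, and the service line $Z'_{i-1}$) carries over unchanged from the formula for $\phi$. The adaptation relative to \cite[Lemma~1]{pantazis2026heterogeneous} therefore only concerns notation.
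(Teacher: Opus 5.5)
Your proof is correct and follows the paper's argument: apply the triangle inequality to the explicit expression for $[\phi(\rho,q_c)]_i$, then substitute the bounds from Assumption~\ref{ass:prosumer_bounds}. Your explicit remark that $R_k, X_k, R'_{i-1}, X'_{i-1}\ge 0$ is what lets the coefficients appear without absolute values is a useful detail that the paper leaves implicit.
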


\emph{Proof:}
Applying the triangle inequality to $[\phi(\rho,q_c)]_i$
and substituting the bounds in
Assumption~\ref{ass:prosumer_bounds}
gives~\eqref{eq:phi_bound}. \hfill $\blacksquare$

\begin{figure}[t]
\centering
\resizebox{\columnwidth}{!}{%
\begin{tikzpicture}[
    >=Latex,
    line width=0.9pt,
    font=\small,
    block/.style={
        draw,
        thick,
        rounded corners=1mm,
        minimum width=3.8cm,
        minimum height=1.2cm,
        align=center
    },
    plant/.style={block, fill=plantblue},
    controller/.style={block, fill=controlorange},
    sum/.style={
        draw,
        circle,
        fill=sumfill,
        minimum size=7mm,
        inner sep=0pt
    },
    wirelabel/.style={
        fill=white,
        inner sep=1.5pt,
        font=\scriptsize
    }
]

\node[plant]      (plant) at (0, 1.35)
{\textbf{Distribution Network}\\[0.5mm] $y_{\mathrm{ph}} = H\tilde{u} + w$};

\node[controller] (ctrl)  at (0,-1.35)
{\textbf{Droop Controllers}\\[0.5mm] $\dot{q}_g = -A(\tau)q_g + A(\tau)K(\tilde{y})$};

\node[sum] (sumu) at (-3.15,0) {$+$};
\node[sum] (sumy) at ( 3.15,0) {$+$};

\draw[->] (ctrl.west) -- (-3.15,-1.35) -- (sumu.south);

\draw[->] (sumu.north) -- (-3.15,1.35) -- (plant.west);

\draw[->] (plant.east) -- (3.15,1.35) -- (sumy.north);

\draw[->] (sumy.south) -- (3.15,-1.35) -- (ctrl.east);

\draw[->] (0,2.55) -- (plant.north);
\draw[->] (-4.15,0) -- (sumu.west);
\draw[->] ( 4.15,0) -- (sumy.east);

\node[wirelabel] at (-2.7,-1.62) {$u=q_g$};
\node[wirelabel] at (-2.8,1.62) {$\tilde{u}=u+a_u$};

\node[wirelabel] at ( 2.3, 1.62) {$y_{\mathrm{ph}}$};
\node[wirelabel] at ( 2.9,-1.62) {$\tilde{y}=y_{\mathrm{ph}}+a_y$};

\node[wirelabel] at (0.23,2.23) {$w$};
\node[wirelabel] at (-3.78,0.22) {$a_u$};
\node[wirelabel] at ( 3.78,0.22) {$a_y$};

\end{tikzpicture}%
}
\caption{Closed-loop distribution network with droop control
under actuator and sensor attacks.}
\label{fig:attacked_system}
\end{figure}

\begin{corollary}\label{cor:disturbance_bounds}
Under Assumptions~\ref{ass:prosumer_bounds}
and~\ref{ass:head_voltage}, the disturbance
$w=\phi(\rho,q_c)+(\bar v^2-\hat v_0^2)\mathbf{1}$
satisfies $|w_i(t)|\leq\bar w_i$ for all $i\in\mathcal N$, and $t\geq 0$, where
    $\bar{w}_i=\bar\phi_i+2\bar v\Delta_0+\Delta_0^2$, with  $\bar\phi_i$ as in 
Lemma~\ref{lem:heterogeneous_disturbance}. \hfill $\square$
\end{corollary}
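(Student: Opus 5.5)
The bound follows from the triangle inequality applied componentwise to $w_i=[\phi(\rho,q_c)]_i+(\bar v^2-\hat v_0^2)$. Fix $i\in\mathcal N$ and $t\geq 0$. Then
\begin{align}
|w_i(t)|\leq \bigl|[\phi(\rho(t),q_c(t))]_i\bigr| + \bigl|\bar v^2-\hat v_0(t)^2\bigr|. \nonumber
\end{align}
Lemma~\ref{lem:heterogeneous_disturbance}, which uses Assumption~\ref{ass:prosumer_bounds}, bounds the first term by $\bar\phi_i$. All that remains is to bound the head-of-feeder term uniformly in time using Assumption~\ref{ass:head_voltage}.

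For the second term, write $\delta(t):=\hat v_0(t)-\bar v$, so that $|\delta(t)|\leq\Delta_0$. Then $\hat v_0^2=(\bar v+\delta)^2=\bar v^2+2\bar v\delta+\delta^2$, which gives
\begin{align}
\bar v^2-\hat v_0^2=-2\bar v\delta-\delta^2 . \nonumber
\end{align}
Applying the triangle inequality again, and using $\bar v>0$, yields $|\bar v^2-\hat v_0^2|\leq 2\bar v|\delta|+\delta^2\leq 2\bar v\Delta_0+\Delta_0^2$.

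Combining the two bounds gives $|w_i(t)|\leq\bar\phi_i+2\bar v\Delta_0+\Delta_0^2=\bar w_i$ for every $i$ and $t\geq 0$. The only step needing any care is the factorization of $\bar v^2-\hat v_0^2$ in terms of the deviation $\delta$; the rest is direct substitution of the earlier bounds.
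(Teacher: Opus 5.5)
Your proof is correct and matches the paper's argument. You apply the triangle inequality componentwise, use Lemma~\ref{lem:heterogeneous_disturbance} for the $\phi$-term, and use Assumption~\ref{ass:head_voltage} to get $|\bar v^2-\hat v_0^2|\le 2\bar v|\hat v_0-\bar v|+|\hat v_0-\bar v|^2\le 2\bar v\Delta_0+\Delta_0^2$. As a minor point, rename your $\delta(t)$, since $\delta$ already denotes the voltage tolerance in the paper.
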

\emph{Proof:}
By Assumption~\ref{ass:head_voltage}, $|\bar v^2-\hat v_0(t)^2| \leq
    2\bar v|\hat v_0(t)-\bar v|
    +|\hat v_0(t)-\bar v|^2
    \leq 2\bar v\Delta_0+\Delta_0^2$.
Leveraging 
Lemma~\ref{lem:heterogeneous_disturbance} and the
triangle inequality concludes the proof. \hfill $\blacksquare$ \\
In other words, it holds that $w \in \mathcal{W}$, where we define $\mathcal{W}$ as
\begin{align}
\mathcal W
    :=
    \left\{
        w\in\mathbb R^N:
        |w_i|\leq\bar w_i, \text{ for each } i\in\mathcal N
    \right\}.  \nonumber 
\end{align} 
    The control objective is to maintain the voltage magnitude at every customer
within a prescribed operating range around the nominal voltage $\bar v$.
Specifically, for a given tolerance $\delta \geq 0$, the desired safety
requirement is
\begin{equation}
\label{eq:systemsafety}
    \bar v-\delta \leq v_i(t) \leq \bar v+\delta,  
\end{equation}
for all $i\in \mathcal{N}$, for all  $t\geq0$.
Under nominal operation, the droop controllers decide on the inverter
reactive power injections using the locally measured voltages. However,
corruption of either the voltage measurements or the actuator commands can have  a significant effect on the closed-loop response, thus driving the physical voltages
outside the safe range in \eqref{eq:systemsafety}. We therefore seek
conditions under which \eqref{eq:systemsafety} as well as the
inverter capacity limits are satisfied for all
admissible sensor and actuator attacks and exogenous disturbances considered. This
motivates the closed-loop model under input and output attacks introduced in the next subsection. 

\subsection{ Disturbances and Attack Models}
We now extend the nominal model in \eqref{state_dynamics}-\eqref{output_dynamics}
to account for malicious corruption of the sensing and actuation channels.
For compactness, we denote the exogenous disturbance by $w := \phi(\rho,q_c) + y_0\mathbf{1}$,
such that the output equation is 
\begin{equation}
    y = Hq_g+w.
\end{equation}

Let $a_u\in\mathbb R^N$ denote an additive actuator attack.
The nominal command is $u=q_g$, whereas the network
receives the attacked injection $\tilde u=q_g+a_u$.
Consequently, inverter capacity limits must be imposed
on $\tilde u$, rather than on $q_g$ alone. Under an additive attack $a_y\in\mathbb{R}^N$ on the squared voltage
deviation measurement, the output dynamics in \eqref{output_dynamics} take the form  $\tilde y= Hq_g+Ha_u+w+a_y$. Consequently, replacing the nominal controller input $y$ in
\eqref{state_dynamics} by $\tilde y$ yields
\begin{align}
    \dot q_g
    &= -Aq_g + AK(\tilde y) \notag\\
    &= -Aq_g
       +AK\!\left(Hq_g+w+a_y+Ha_u\right).
\end{align} 

Denoting $a:=\operatorname{col}(a_y,a_u)$,
$D:=[I\ H]$, and $D_y:=[0\ H]$, we have
$Da=a_y+Ha_u$ and $D_ya=Ha_u$.
Here, $D_y$ captures the effect of actuator attacks on
the actual customer voltages. Thus,
$y_{\mathrm{ph}}:=Hq_g+w+D_ya$ denotes the squared
voltage deviations at the customers, including the effect
of actuator attacks but before sensor corruption.
The corrupted voltage measurement $\tilde{y}$ available to the
droop controller satisfies $\tilde y=y_{\mathrm{ph}}+a_y=Hq_g+w+Da.$
Thus,  the closed-loop system under input and output attacks can be written compactly in the form:

\begin{subequations}
\label{eq:attacked-system}
\begin{empheq}[left=\empheqlbrace]{align}
\dot q_g &= -Aq_g + AK(\tilde{y}),
\label{eq:attacked-state}\\
\tilde{y} &= Hq_g + w + D a.
\label{eq:attacked-output}
\end{empheq}
\end{subequations}
The disturbance $w=\phi(\rho,q_c)+y_0\mathbf 1$
collects the customer power disturbances and voltage
deviations from the reference at the head of the feeder. Following \cite{lin2026}, we describe admissible
attacks through a state-dependent quadratic constraint.

\begin{assumption}\label{ass:attack_set}
At each time $t\geq0$, the attack satisfies
$a(t)\in\mathcal A(q_g(t))$, where
\begin{equation}
\mathcal A(q_g):=
\left\{
a\in\mathbb R^{2N}:
\begin{bmatrix}q_g\\a\end{bmatrix}^{\!\top}
\begin{bmatrix}Q_1&Q_2\\Q_2^\top&Q_3\end{bmatrix}
\begin{bmatrix}q_g\\a\end{bmatrix}
\leq1
\right\},
\label{eq:attackset}
\end{equation}
with $Q_1=Q_1^\top\in\mathbb R^{N\times N}$,
$Q_2\in\mathbb R^{N\times2N}$, and
$Q_3=Q_3^\top\in\mathbb R^{2N\times2N}$ with
$Q_3\succ0$.
\hfill $\square$
\end{assumption}
Since $Q_3\succ0$, completing the square in  \eqref{eq:attackset} means that $\mathcal A(q_g)$ is nonempty if and only if
$q_g^\top
\left(Q_1-Q_2Q_3^{-1}Q_2^\top\right)q_g\le1.$
For a candidate ellipsoid
$\mathcal E(P):=\{q_g:q_g^\top Pq_g\le1\}$,
with $P=P^\top\succ0$, nonemptiness of
$\mathcal A(q_g)$ for every $q_g\in\mathcal E(P)$
is therefore equivalent to the inequality:
\begin{equation}
Q_1-Q_2Q_3^{-1}Q_2^\top\preceq P. 
\label{eq:attack-nonempty}
\end{equation}
We impose this condition on the candidate ellipsoids
used in the subsequent safety certificates.
\section{Safety Verification Under Attacks}
We derive the safety certificate in three steps. We first represent the characteristics of the droop controller  and the
admissible attacks and disturbances by quadratic constraints. We then establish conditions for robust positive
invariance of a set of reactive power states of the
closed-loop system in \eqref{eq:attacked-system}, which is
characterized by a candidate ellipsoid. In other words, when a reactive power state starts from within this ellipsoid, it is guaranteed that it will never leave it under the considered assumptions.  Finally, we impose conditions ensuring that the inverter commands and voltages remain within their predefined operating limits for all admissible attacks and disturbances.

Let $d_i$ denote the slope bound in
Assumption~\ref{assum:sector-bounded}.
The following lemma expresses the resulting sector constraint.
\begin{lemma} \label{lem:sector1}
Under Assumption~1, let $S_1=0$,
$S_2=\text{diag}(d_1,\ldots,d_N)$.
Then, for every positive diagonal matrix
$V=\text{diag}(\nu_1,\ldots,\nu_N)\succ0$, it holds for all $\eta \in \mathbb{R}^N$ that  
\begin{equation}
\label{eq:sector}
(K(\eta)-S_1\eta)^\top
    V (K(\eta)-S_2\eta)
    \leq 0.
    \end{equation} \label{eq:sector_valued}
 \hfill $\square$
\end{lemma}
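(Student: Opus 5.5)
With $S_1=0$ and $S_2=\operatorname{diag}(d_1,\ldots,d_N)$, the quadratic form in \eqref{eq:sector} is diagonal. The plan is therefore to reduce it to a componentwise scalar inequality. Expanding gives
\[
(K(\eta)-S_1\eta)^\top V(K(\eta)-S_2\eta)=\sum_{i=1}^N \nu_i\,K_i(\eta_i)\bigl(K_i(\eta_i)-d_i\eta_i\bigr),
\]
because $K(\eta)$ acts componentwise and $V$, $S_2$ are diagonal. Since every $\nu_i>0$, it is enough to show that each summand $K_i(\eta_i)\bigl(K_i(\eta_i)-d_i\eta_i\bigr)$ is nonpositive for every $\eta_i\in\mathbb R$.

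The scalar claim comes from Assumption~\ref{assum:sector-bounded}. If $\eta_i=0$, then $K_i(0)=0$ and the summand is zero. If $\eta_i\neq 0$, apply the slope condition with $w=\eta_i$ and $v=0$, using $K_i(0)=0$. This gives $K_i(\eta_i)/\eta_i\in[0,d_i]$, so $K_i(\eta_i)=\kappa_i\eta_i$ for some $\kappa_i\in[0,d_i]$. Substituting, the summand becomes $\kappa_i(\kappa_i-d_i)\eta_i^2$. This is nonpositive because $\kappa_i\ge0$ and $\kappa_i-d_i\le0$.

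Summing with the positive weights $\nu_i$ then gives \eqref{eq:sector} for all $\eta\in\mathbb R^N$. The only step needing any care is that the incremental slope bound implies the sector bound through the origin, and this is exactly where $K_i(0)=0$ is used. The saturation bound $\bar K_i$ is not needed for this lemma.
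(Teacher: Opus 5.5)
Your proof is correct and follows the paper's argument. Both reduce to the diagonal sum, apply the slope condition at $v=0$ with $K_i(0)=0$ to obtain $K_i(\eta_i)/\eta_i\in[0,d_i]$, treat $\eta_i=0$ separately, and sum with the positive weights $\nu_i$; your substitution $K_i(\eta_i)=\kappa_i\eta_i$ only makes the sign step more explicit.
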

\emph{Proof:}
For $\eta_i\neq0$, setting $v=0$ and $w=\eta_i$ in Assumption~1 gives $\dfrac{K_i(\eta_i)}{\eta_i} \in [0, d_i]$,
where $K_i(0)=0$. Hence, it holds that $K_i(\eta_i)
\bigl(K_i(\eta_i)-d_i\eta_i\bigr)\leq 0$.
For $\eta_i=0$, the inequality trivially holds since also
$K_i(0)=0$. Multiplying by $\nu_i>0$ and summing over
$i\in \mathcal{N}$ concludes then the proof. \hfill $\blacksquare$

Lemma 2 provides a quadratic inequality satisfied
by every droop controller that follows Assumption 1.
We use this inequality as an outer approximation
of the nonlinear relation. 
For the attacked system, define
$\gamma:=Hq_g+w+Da$ and $\kappa:=K(\gamma)$.  

We now define the augmented vector $\chi:=\col{q_g,\kappa,a,w,1}$. Furthermore, let us introduce the selection matrix $T:=
\begin{bmatrix}
H & 0 & D & I & 0\\
0 & I & 0 & 0 & 0
\end{bmatrix}$ such that
$\begin{bmatrix}
\gamma\\
\kappa
\end{bmatrix}=T\chi$.
The following result expresses \eqref{eq:sector}
as a quadratic constraint in $\chi$.

\begin{lemma} \label{lem:sector_quadratic}
Under  Assumption~\ref{assum:sector-bounded}, \eqref{eq:sector} evaluated at $\eta=\gamma$ is equivalent to the constraint:
\begin{align}
F_s(\chi):=\chi^\top\Phi_s\chi\leq0, \label{eq:sector-lmi}
\end{align}
with $\Phi_s \!=\!T^\top\Psi_sT$,  $\Psi_s \!=
\begin{bmatrix}
\text{He}(S_1^\top VS_2)
&
-(S_1+S_2)^\top V
\\
\star
&
2V
\end{bmatrix}.$
\end{lemma}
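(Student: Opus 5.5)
The plan is to expand the left-hand side of \eqref{eq:sector} at $\eta=\gamma$, write it as a quadratic form in the stacked vector $\operatorname{col}(\gamma,\kappa)$, and then pull it back to $\chi$ using the identity $\operatorname{col}(\gamma,\kappa)=T\chi$. Since $\kappa=K(\gamma)$, this is exactly $(\kappa-S_1\gamma)^\top V(\kappa-S_2\gamma)$, which multiplies out to
\begin{equation*}
\kappa^\top V\kappa-\kappa^\top VS_2\gamma-\gamma^\top S_1^\top V\kappa+\gamma^\top S_1^\top VS_2\gamma .
\end{equation*}

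First, I will symmetrize the expression. The $\gamma\gamma$ term equals $\tfrac12\gamma^\top \He(S_1^\top VS_2)\gamma$, since a scalar equals its own transpose. The two cross terms combine into $-\gamma^\top(S_1+S_2)^\top V\kappa$. Here I use that $V$, $S_1$, $S_2$ are diagonal, hence symmetric and mutually commuting, so $\kappa^\top VS_2\gamma=\gamma^\top S_2^\top V\kappa$. Writing the expression as $\operatorname{col}(\gamma,\kappa)^\top M\operatorname{col}(\gamma,\kappa)$ with $M$ symmetric, the off-diagonal block must be $M_{12}=-\tfrac12(S_1+S_2)^\top V$, because it appears twice in the expansion. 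The diagonal blocks are $M_{11}=\tfrac12\He(S_1^\top VS_2)$ and $M_{22}=V$. Comparing with the statement gives $\Psi_s=2M$. The $\star$ block is then $-V(S_1+S_2)$, which is consistent with the symmetry of $\Psi_s$.

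Next, I substitute $\operatorname{col}(\gamma,\kappa)=T\chi$, which holds by the definition of $T$ and $\gamma=Hq_g+w+Da$. This gives
\begin{equation*}
2(\kappa-S_1\gamma)^\top V(\kappa-S_2\gamma)=\chi^\top T^\top\Psi_sT\chi=\chi^\top\Phi_s\chi=F_s(\chi).
\end{equation*}
Because the factor $2$ is positive, $F_s(\chi)\le0$ holds if and only if \eqref{eq:sector} holds at $\eta=\gamma$. This is the claimed equivalence.

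Finally, Lemma~\ref{lem:sector1}, which is valid for every $\eta\in\mathbb R^N$ under Assumption~\ref{assum:sector-bounded}, guarantees that the constraint is satisfied by the actual closed-loop signals. No step is genuinely hard. The only point needing care is the bookkeeping of the factor $\tfrac12$ from symmetrizing the off-diagonal block, which explains why $\Psi_s$ carries $2V$ and a full $\He(\cdot)$ rather than $V$ and $S_1^\top VS_2$. I would check this scaling explicitly before finishing.
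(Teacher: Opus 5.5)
Your proposal is correct and follows essentially the same route as the paper. You expand $2(\kappa-S_1\gamma)^\top V(\kappa-S_2\gamma)$ into the block quadratic form $\operatorname{col}(\gamma,\kappa)^\top\Psi_s\operatorname{col}(\gamma,\kappa)$, pull it back via $\operatorname{col}(\gamma,\kappa)=T\chi$, and note explicitly that the positive factor $2$ preserves the sign, which supplies the equivalence the paper leaves implicit. Transposing the scalar cross term only requires $V$ to be symmetric, so the commutativity of the diagonal matrices that you invoke is not actually used.
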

\emph{Proof:} Expanding twice the left-hand side of
$(\kappa-S_1\gamma)^\top
    V
    (\kappa-S_2\gamma)
    \leq 0$ gives
\begin{align}
&2(\kappa-S_1\gamma)^\top
V(\kappa-S_2\gamma) \nonumber \\
&=
2\kappa^\top V\kappa
-\kappa^\top V(S_1+S_2)\gamma \nonumber \\
&-\gamma^\top(S_1+S_2)^\top V\kappa
+\gamma^\top\text{He}(S_1^\top VS_2)\gamma
=
\begin{bmatrix}
\gamma\\ \kappa
\end{bmatrix}^{\!\top}
\Psi_s
\begin{bmatrix}
\gamma\\ \kappa
\end{bmatrix}. \nonumber 
\end{align}
Using $\begin{bmatrix}
        \gamma \\ \kappa
    \end{bmatrix} \nonumber 
    =T\chi$, it then holds that
\begin{equation}
2(\kappa-S_1\gamma)^\top
V(\kappa-S_2\gamma)
=
\chi^\top T^\top\Psi_sT\chi. \nonumber
\end{equation}
 which concludes the proof since 
$\Phi_s=T^\top\Psi_sT$.
\hfill$\blacksquare$

Lemma~\ref{lem:sector_quadratic} expresses the sector
constraint implied by Assumption~\ref{assum:sector-bounded}
in quadratic form of the augmented vector $\chi$.
We next combine this constraint with quadratic inequalities of the attack and disturbance bounds, aiming to
derive a sufficient condition for  robust positive
invariance.

\subsection{Robust Positive Invariance and Capacity Limits}

We first establish robust positive invariance of a
candidate set of the reactive power state.
We subsequently impose containment conditions ensuring
that the corresponding inverter reactive power injections and customer
voltages satisfy their operating limits.
The following definition formalizes the invariance property.
\begin{definition}
Consider the attacked closed-loop system in \eqref{eq:attacked-state}-\eqref{eq:attacked-output},
where $w(t)\in\mathcal W$ and \(a(t)\in\mathcal A(q_g(t))\) for all \(t\geq0\). A set \(\mathcal S\subseteq\mathbb R^N\) is said to be robust positively invariant (RPI) if, for every initial condition \(q_g(0)\in\mathcal S\) and every admissible disturbance and attack pair $(w(\cdot),a(\cdot)) \in \mathcal{W} \times \mathcal{A}(q_g(\cdot))$, the corresponding closed-loop trajectory satisfies $q_g(t)\in\mathcal S$ for all $t\geq0$.  \hfill $\square$
\end{definition} 

To construct an RPI set for our problem setting, we consider the quadratic
function  $V_q(q_g):=q_g^\top Pq_g$, with a symmetric and positive definite matrix $P$
and the ellipsoid of the form
\begin{equation}
    \mathcal E(P):=
    \left\{
        q_g\in\mathbb R^N:
        q_g^\top Pq_g\leq1
    \right\}.
    \label{eq:ellipsoid}
\end{equation}
Let us now define
the selection matrices
$E_q:=
\begin{bmatrix}
I&0&0&0&0
\end{bmatrix}$,
$E_w:=
\begin{bmatrix}
0&0&0&I&0
\end{bmatrix}$,
and $G:=
\begin{bmatrix}
-A&A&0&0&0
\end{bmatrix}.$
Leveraging those matrices, we can then write $q_g$ and $\dot q_g$ as an affine function of $\chi$. Specifically, we have that $q_g=E_q\chi$,  and  $\dot q_g=G\chi$.
Taking the time derivative of $V_q$  it holds that
\begin{align}
\dot V_q
&=
\dot q_g^\top Pq_g+q_g^\top P\dot q_g \notag\\
&=
\chi^\top
\text{He}(E_q^\top PG)\chi
    =\chi^\top\Phi_0(P)\chi:=F_0(\chi), \nonumber 
\end{align}
 where 
    $\Phi_0(P):=\text{He}(E_q^\top PG)$. The boundary of the ellipsoid $\mathcal E(P)$ can then be described by the equality:
\begin{align}
    F_b(\chi)&:=
    q_g^\top Pq_g-1
    = \nonumber \\
    &\chi^\top (E_q^\top PE_q-ee^\top) \chi=
    \chi^\top\Phi_{b}(P)\chi=0, \label{eq:ellip_boundary}
\end{align}
where $\Phi_{b}(P):=
    E_q^\top PE_q-ee^\top$.

Now let us define the selection matrix $T_a:= \begin{bmatrix}
I&0&0&0&0\\
0&0&I&0&0
\end{bmatrix}$, for which it holds that $T_a\chi=
    \begin{bmatrix}
        q_g &  a
    \end{bmatrix}^\top$. 
We now wish to render the disturbance and attack bounds  as quadratic constraints. For notational convenience, let us now introduce the matrix
   $Q_a:=
    \begin{bmatrix}
        Q_1&Q_2\\
        Q_2^\top&Q_3
    \end{bmatrix}$. 
\begin{lemma}
The following results hold: 
\begin{enumerate}
    \item[(i)] The constraint in \eqref{eq:attackset} can equivalently be written as
    \begin{align}
         F_a(\chi):=
        \chi^\top\Phi_a\chi\leq0, \label{eq:attack_constraint_lmi}
    \end{align}
    where 
    $\Phi_a:=T_a^\top Q_aT_a-ee^\top.$
\item[(ii)]  For each \(j\in\mathcal N\), let us define $\Phi_{w,j}
    :=
    E_w^\top e_j e_j^\top E_w
    -\bar w_j^2 ee^\top$,
    and $F_{w,j}(\chi)
    := \chi^\top \Phi_{w,j}\chi.$
    Then, $w\in\mathcal W$ holds if and only if 
    \begin{align}
    F_{w,j}(\chi)\leq 0, \text{ for all } j\in\mathcal N. \label{eq:disturbance_LMI}
    \end{align}
\end{enumerate}
\end{lemma}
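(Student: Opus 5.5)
Both parts follow by direct substitution of the selection matrices into the quadratic forms, using only that the last entry of $\chi=\col{q_g,\kappa,a,w,1}$ equals one. The first step is to record the identities $T_a\chi=\col{q_g,a}$, $E_w\chi=w$, and $e^\top\chi=1$. The first two follow from the block structure of $T_a$ and $E_w$, since they pick out the first, third and fourth blocks of $\chi$. The third holds because $e$ is the last canonical basis vector of $\mathbb R^{5N+1}$.

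\emph{Part (i).} I would expand $F_a(\chi)=\chi^\top T_a^\top Q_aT_a\chi-\chi^\top ee^\top\chi$. Substituting the identities above gives
\begin{equation}
F_a(\chi)=\begin{bmatrix}q_g\\a\end{bmatrix}^{\!\top}Q_a\begin{bmatrix}q_g\\a\end{bmatrix}-(e^\top\chi)^2 = \begin{bmatrix}q_g\\a\end{bmatrix}^{\!\top}Q_a\begin{bmatrix}q_g\\a\end{bmatrix}-1. \nonumber
\end{equation}
Since $Q_a$ is exactly the block matrix appearing in \eqref{eq:attackset}, the inequality $F_a(\chi)\le0$ holds if and only if the quadratic form is at most one, that is, if and only if $a\in\mathcal A(q_g)$. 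The equivalence runs in both directions because every step is an identity.

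\emph{Part (ii).} For each $j\in\mathcal N$, I would compute $\chi^\top E_w^\top e_je_j^\top E_w\chi=(e_j^\top w)^2=w_j^2$ and $\chi^\top\bar w_j^2ee^\top\chi=\bar w_j^2$. Together these give $F_{w,j}(\chi)=w_j^2-\bar w_j^2$. Since $\bar w_j\ge0$, the inequality $w_j^2\le\bar w_j^2$ is equivalent to $|w_j|\le\bar w_j$. Imposing this for all $j\in\mathcal N$ is exactly the definition of $w\in\mathcal W$.

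No step is a genuine obstacle; the only points needing care are the following.
\begin{itemize}
\item[(a)] The equivalences are stated for vectors $\chi$ whose last coordinate equals $1$, not for arbitrary $\chi\in\mathbb R^{5N+1}$. This restriction is what makes the constant terms $-ee^\top$ and $-\bar w_j^2ee^\top$ homogenize the constraints correctly.
\item[(b)] The nonnegativity of $\bar w_j$ in Corollary~\ref{cor:disturbance_bounds} is needed when passing between squares and absolute values. It holds because $\bar\phi_j$, $\bar v$ and $\Delta_0$ are all nonnegative.
\end{itemize}
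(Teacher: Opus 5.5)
Your proposal is correct and matches the paper's proof: substituting $T_a\chi=\col{q_g,a}$, $E_w\chi=w$ and $e^\top\chi=1$ gives $F_a(\chi)=\col{q_g,a}^\top Q_a\col{q_g,a}-1$ and $F_{w,j}(\chi)=w_j^2-\bar w_j^2$, and both equivalences follow. Your two cautionary remarks spell out points the paper uses tacitly. These are that $\chi$ must have unit last entry, and that $\bar w_j\ge0$ is needed to pass from $w_j^2\le\bar w_j^2$ to $|w_j|\le\bar w_j$.
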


\begin{proof}
(i) Since $T_a\chi=
\begin{bmatrix}
q_g & 
a
\end{bmatrix}^\top$
and \(e^\top\chi=1\), it follows that
\[
F_a(\chi)
=
\begin{bmatrix}
q_g\\
a
\end{bmatrix}^{\!\top}
Q_a
\begin{bmatrix}
q_g\\
a
\end{bmatrix}
-1.
\]
Hence, \(F_a(\chi)\leq0\) is equivalent to
\eqref{eq:attackset}.

(ii) Since \(E_w\chi=w\) and \(e^\top\chi=1\), it follows that $F_{w,j}(\chi)
=
w_j^2-\bar w_j^2$.
Hence, $F_{w,j}(\chi)\leq0$ for all $j \in \mathcal{N}$ holds 
if and only if $|w_j|\leq\bar w_j$ for all $j \in \mathcal{N}$ which is equivalent to $w\in\mathcal W$.
\end{proof}
Thus, every admissible closed-loop trajectory satisfies
the quadratic constraints in \eqref{eq:sector-lmi},
\eqref{eq:attack_constraint_lmi}, and
\eqref{eq:disturbance_LMI}.
On the boundary of $\mathcal E(P)$, we additionally
have $F_b(\chi)=0$, as expressed in
\eqref{eq:ellip_boundary}.
The following proposition now combines these relations
to obtain a sufficient condition for
$\dot V_q\leq0$ on the boundary, thereby ensuring
that trajectories initialized in $\mathcal E(P)$
remain in this set.
\begin{proposition} \label{prop:RPI}
Under Assumptions~1-4, consider the ellipsoid $\mathcal E(P)$ in
\eqref{eq:ellipsoid}. Furthermore, assume that \eqref{eq:attack-nonempty} holds. Then, 
if there exists a positive diagonal matrix
$V=\operatorname{diag}(\nu_1,\ldots,\nu_N)\succ0$,
and parameters $(\alpha_1, \alpha_2, (\alpha_{w,j})_{j \in \mathcal{N}}) \in \mathbb R \times \mathbb R_{+} \times \prod_{j \in \mathcal{N}}\mathbb R_{+}$
such that, for every $\chi\in\mathbb R^{5N+1}$,
\begin{equation}
\begin{aligned}
    F_0(\chi)+\alpha_1F_b(\chi)
    -\alpha_2&F_a(\chi)-F_s(\chi)\\
    &-\sum_{j=1}^{N}\alpha_{w,j}F_{w,j}(\chi)\leq0,
\end{aligned}
\label{eq:rpi-condition}
\end{equation}
then $\mathcal E(P)$ is RPI for the system in
\eqref{eq:attacked-state}-\eqref{eq:attacked-output}.
\hfill $\square$
\end{proposition}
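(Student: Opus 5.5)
The argument is an S-procedure evaluation followed by a standard boundary (Nagumo-type) invariance argument. Fix any admissible trajectory $q_g(\cdot)$ with $q_g(0)\in\mathcal E(P)$, together with its disturbance $w(\cdot)\in\mathcal W$ and attack $a(\cdot)\in\mathcal A(q_g(\cdot))$. At each time $t$, form the augmented vector $\chi(t)=\col{q_g(t),\kappa(t),a(t),w(t),1}$ with $\kappa=K(\gamma)$ and $\gamma=Hq_g+w+Da$.

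\emph{Step 1: sign of each constraint term along the trajectory.} By construction, $\chi(t)$ satisfies:
\begin{itemize}
\item $F_s(\chi(t))\le 0$, by Lemma~\ref{lem:sector1} and Lemma~\ref{lem:sector_quadratic} applied with $\eta=\gamma$;
\item $F_a(\chi(t))\le0$, by part (i) of the preceding lemma and Assumption~\ref{ass:attack_set};
\item $F_{w,j}(\chi(t))\le0$ for every $j$, by part (ii) and Corollary~\ref{cor:disturbance_bounds}.
\end{itemize}
Moreover $F_0(\chi(t))=\dot V_q(t)$, since $\dot q_g=G\chi$ holds exactly along solutions of \eqref{eq:attacked-state}.

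\emph{Step 2: the S-procedure on the boundary.} Suppose $q_g(t)\in\partial\mathcal E(P)$. Then $F_b(\chi(t))=0$, so the $\alpha_1$ term vanishes; this is why $\alpha_1$ may have arbitrary sign. Rearranging \eqref{eq:rpi-condition} evaluated at $\chi(t)$ gives
\begin{equation*}
\dot V_q(t)\le \alpha_2F_a(\chi(t))+F_s(\chi(t))+\sum_{j}\alpha_{w,j}F_{w,j}(\chi(t))\le 0 .
\end{equation*}
The last inequality uses $\alpha_2,\alpha_{w,j}\ge0$, the implicit unit multiplier on $F_s$ (absorbed into the free scaling of $V\succ0$), and the signs from Step 1.

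The role of \eqref{eq:attack-nonempty} is only to guarantee that $\mathcal A(q_g)\neq\emptyset$ for every $q_g\in\mathcal E(P)$. This makes the admissible class of attacks well defined and consistent on the whole ellipsoid; it is not needed for the inequality chain itself.

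\emph{Step 3: from $\dot V_q\le0$ on $\partial\mathcal E(P)$ to invariance.} This is the expected obstacle. A non-strict inequality on the boundary does not by itself exclude trajectories that graze the boundary and then exit, especially since $w$ and $a$ are only measurable signals.

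The first attempt is to argue by contradiction. Suppose $V_q(q_g(t_1))>1$ for some $t_1$. Let $t_0=\sup\{t<t_1: V_q(q_g(t))\le1\}$; on $(t_0,t_1]$ we have $V_q>1$.

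A cleaner route avoids needing $\dot V_q\le 0$ strictly outside the ellipsoid. Inequality \eqref{eq:rpi-condition} is homogeneous except for the terms involving $e$. So for $V_q>1$ we instead use the full quadratic form without $F_b=0$:
\begin{equation*}
\dot V_q\le -\alpha_1 F_b(\chi)=-\alpha_1(V_q-1).
\end{equation*}
Comparison (Grönwall) on $[t_0,t_1]$ with $V_q(t_0)=1$ then yields $V_q(t)-1\le 0$ for $t\in[t_0,t_1]$, contradicting $V_q(t_1)>1$.

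The validity of Step 1 requires no boundary assumption. Hence the bound $\dot V_q\le-\alpha_1(V_q-1)$ holds for all $t$, and the comparison lemma applied to the absolutely continuous function $V_q(q_g(\cdot))-1$ finishes the proof regardless of the sign of $\alpha_1$. I expect to present this comparison argument as the main step, with the boundary computation as intuition.
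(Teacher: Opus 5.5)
Your proof is correct, and its decisive step differs from the paper's.

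The paper's argument is exactly your Step~2. It evaluates \eqref{eq:rpi-condition} only at boundary points, uses $F_b(\chi)=0$ to obtain $\dot V_q\le0$ on $\partial\mathcal E(P)$, and concludes that the vector field ``cannot point outward,'' hence $\mathcal E(P)$ is RPI. You instead use the fact that the sector, attack and disturbance constraints hold at every time along an admissible trajectory, not only on the boundary. Then \eqref{eq:rpi-condition} gives the global differential inequality $\dot V_q\le-\alpha_1(V_q-1)$. From $\frac{d}{dt}\bigl(e^{\alpha_1 t}(V_q-1)\bigr)\le0$ (a.e.) you get $V_q(q_g(t))-1\le e^{-\alpha_1 t}\bigl(V_q(q_g(0))-1\bigr)\le0$ for either sign of $\alpha_1$.

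The paper's route is shorter and geometrically transparent. However, its passage from a non-strict boundary condition to invariance is a Nagumo-type step that needs uniqueness or regularity of solutions, and this is not verified there. The disturbance $w$ is only a bounded signal, and the attack may be any, possibly irregular, function of $q_g$ within $\mathcal A(q_g)$, so the closed loop is effectively a differential inclusion. A boundary-only condition then does not rule out solutions that touch $\partial\mathcal E(P)$ and leave. For example, $\dot x=\sqrt{\max\{x-1,0\}}$ satisfies $\dot x\le0$ at the boundary of $\{x\le1\}$, yet $x(t)=1+t^2/4$ is a solution that exits.

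Your comparison argument needs only absolute continuity of $t\mapsto V_q(q_g(t))$ and closes that gap. When $\alpha_1>0$ (as for the reported $a_1\approx1.429$), it also gives exponential decay of $\max\{V_q-1,0\}$ along admissible trajectories starting outside $\mathcal E(P)$.

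Once you have the global inequality, the contradiction set-up with $t_0$ and the homogeneity remark are unnecessary. Apply the comparison directly from $t=0$.
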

\emph{Proof:} Consider any reactive power vector $q_g\in\partial\mathcal E(P)$. As such, it holds that  $F_b(\chi)
=
\chi^\top\Phi_b(P)\chi
=
q_g^\top Pq_g-1
=
0.$
Moreover, the constraints in \eqref{eq:sector-lmi},
\eqref{eq:attack_constraint_lmi}, and
\eqref{eq:disturbance_LMI}
imply, respectively, that $F_s(\chi)=\chi^\top \Phi_s \chi$, $F_a(\chi)$,
and $F_{w,j}(\chi)$ for all $j\in\mathcal N$ are nonpositive for every augmented vector $\chi$, associated with the closed-loop system
\eqref{eq:attacked-state}-\eqref{eq:attacked-output},
under admissible attacks and disturbances.
Since $\alpha_2,\alpha_{w,j}\geq0$, it holds that
\[
-\alpha_2F_a(\chi)
-F_s(\chi)
-\sum_{j=1}^{N}\alpha_{w,j}F_{w,j}(\chi)
\geq0.
\]
Together with $F_b(\chi)=0$ on $\partial\mathcal E(P)$,
\eqref{eq:rpi-condition} therefore yields
$\dot V_q(q_g)=F_0(\chi)\leq0$
for every admissible attack and disturbance whenever
$q_g\in\partial\mathcal E(P)$. Hence, the vector field governing $q_g$ in
\eqref{eq:attacked-system} cannot point
outward through the boundary of $\mathcal E(P)$, and therefore
$\mathcal E(P)$ is robust positively invariant. \hfill $\blacksquare$

Proposition 1 guarantees that the controller state $q_g$ remains in $\mathcal E(P)$ for every attack satisfying Assumption~\ref{ass:attack_set}
and every disturbance $w\in\mathcal W$. This property alone does not, however,  establish the satisfaction of the inverter capacity or the predefined voltage limits. In the subsequent developments, we  derive conditions ensuring that these operating limits are satisfied  throughout the invariant ellipsoid $\mathcal E(P)$.
Condition~\eqref{eq:rpi-condition} depends on the network and droop controller parameters through $\Phi_s$, while the bounds $\bar\rho_i$, $\bar q_{c,i}$, and $\Delta_0$ affect $\Phi_{w,j}$. Likewise, the admissible sensor and actuator attacks
$a=\operatorname{col}(a_y,a_u)$ are characterized by the matrices
$Q_1$, $Q_2$, and $Q_3$ and determine the attack constraint
$F_a(\chi)=\chi^\top\Phi_a\chi\leq0$. For a given operating scenario, feasibility of \eqref{eq:rpi-condition} certifies an RPI set $\mathcal E(P)$ for all disturbances $w\in\mathcal W$ and attacks $a\in\mathcal A(q_g)$. As such, this condition is sufficient to verify whether a given uncertainty and/or attack can be handled by the closed-loop control system. Conversely, infeasibility of \eqref{eq:rpi-condition} only indicates that an RPI set of the considered form could not be certified and does not, by itself, imply that the closed-loop system is unsafe. 

Though Proposition \ref{prop:RPI} ensures that   the reactive power $q_g$ remains within the ellipsoid $\mathcal E(P)$,
the attacked actuator command $\tilde u=q_g+a_u$ is still not guaranteed to
respect the inverter capacity limits. Ensuring the satisfaction of these limits prevents hardware saturation from
altering the injection assumed in the network model.
To establish this guarantee, let $E_u:=[0\ I]$ and
$C_u:=[I\ 0\ E_u\ 0\ 0]$, so that $\tilde u=C_u\chi$. Let us then define 
\[
\Phi_{u,i}:=
\frac{1}{\bar Q_i^2}C_u^\top e_i e_i^\top C_u-ee^\top, 
\]
which allows us to bring the capacity constraint $|\tilde u_i|\leq\bar Q_i$ in the form
$F_{u,i}(\chi):=\chi^\top\Phi_{u,i}\chi\leq0$.
The following result ensures this constraint is satisfied for any reactive power $q_g \in \mathcal E(P)$ under every admissible attack.
\begin{proposition}
\label{prop:actuator-containment}
Let $P=P^\top\succ0$ and consider the set $\mathcal E(P)$ in \eqref{eq:ellipsoid}. For every $i\in\mathcal N$, if there exist
$\beta_{i,b},\beta_{i,a}\geq0$ such that for every $\chi \in \mathbb{R}^{5N+1}$ it holds that
\begin{equation}
F_{u,i}(\chi)
\leq \beta_{i,b}F_b(\chi)
+\beta_{i,a}F_a(\chi),
\label{eq:actuator-containment}
\end{equation}
then, for every $q_g\in\mathcal E(P)$ and
$a\in\mathcal A(q_g)$, the injected reactive power satisfies
$|\tilde u_i|\leq\bar Q_i$ for all $i\in\mathcal N$.\hfill $\square$
\end{proposition}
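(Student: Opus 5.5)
The argument is a direct S-procedure evaluation. We fix $i\in\mathcal N$, $q_g\in\mathcal E(P)$ and $a\in\mathcal A(q_g)$, and we build an augmented vector $\chi=\col{q_g,\kappa,a,w,1}$ with arbitrary entries $\kappa$ and $w$; for instance, $w=0$ and $\kappa=K(Hq_g+Da)$. The choice is harmless because $\Phi_{u,i}$, $\Phi_b(P)$ and $\Phi_a$ place no weight on the $\kappa$ and $w$ blocks. Since \eqref{eq:actuator-containment} is assumed to hold for every $\chi\in\mathbb R^{5N+1}$, it holds in particular for this $\chi$. The proof then consists of evaluating each quadratic form at this point and reading off the sign.

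First, $e^\top\chi=1$ and $C_u\chi=q_g+E_u a=q_g+a_u=\tilde u$, so $F_{u,i}(\chi)=\tilde u_i^2/\bar Q_i^2-1$. Second, by \eqref{eq:ellip_boundary}, $F_b(\chi)=q_g^\top Pq_g-1$, and this is nonpositive because $q_g\in\mathcal E(P)$. Here we have only membership in the ellipsoid, not equality on its boundary, which is exactly why the multiplier $\beta_{i,b}$ must be nonnegative rather than sign-free. Third, by part (i) of the preceding lemma, $F_a(\chi)\le 0$ because $a\in\mathcal A(q_g)$.

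With $\beta_{i,b},\beta_{i,a}\ge0$, the right-hand side of \eqref{eq:actuator-containment} is therefore nonpositive. Hence $\tilde u_i^2/\bar Q_i^2-1\le0$, which gives $|\tilde u_i|\le\bar Q_i$ since $\bar Q_i>0$. Repeating the argument for each $i\in\mathcal N$, with its own multipliers, completes the proof.

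There is no genuine obstacle here. The only points needing care are that $C_u$ indeed extracts $q_g+a_u$ from the block structure $a=\col{a_y,a_u}$ via $E_u=[0\ I]$, and that the inequality is applied at every interior point of $\mathcal E(P)$, not just on its boundary. One may also remark that if $\mathcal A(q_g)$ is empty the statement holds vacuously, so \eqref{eq:attack-nonempty} is not needed for this result.
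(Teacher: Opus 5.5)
Your proof is correct and follows the same argument as the paper. You evaluate the inequality at an augmented vector $\chi$ with last entry $1$ and note that $F_b(\chi)\le 0$ and $F_a(\chi)\le 0$ for $q_g\in\mathcal E(P)$ and $a\in\mathcal A(q_g)$. Nonnegativity of $\beta_{i,b},\beta_{i,a}$ then forces $F_{u,i}(\chi)=\tilde u_i^2/\bar Q_i^2-1\le 0$. Your extra bookkeeping makes explicit what the paper's two-line proof leaves implicit: the $\kappa$ and $w$ blocks are irrelevant, $C_u\chi=q_g+a_u$, and \eqref{eq:attack-nonempty} is not needed for this result.
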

\emph{Proof:}
For every $q_g\in\mathcal E(P)$ and
$a\in\mathcal A(q_g)$, we have $F_b(\chi)\leq0$
and $F_a(\chi)\leq0$. By \eqref{eq:actuator-containment}, and 
since the multipliers $\beta_{i,b}$, $\beta_{i,a}$ are nonnegative, it holds that $F_{u,i}(\chi)
\leq 0$, i.e., 
$|\tilde u_i|\leq\bar Q_i$ for all $i\in\mathcal N$.
\hfill $\blacksquare$
 Proposition \ref{prop:actuator-containment} establishes a capacity guarantee
throughout the candidate ellipsoid. When combined
with Proposition \ref{prop:RPI} for the same matrix $P$, such
guarantee on the inverter capacity holds along every admissible trajectory
initialized in $\mathcal E(P)$.

\subsection{Voltage Safety Certificates}
    The invariant set condition restricts the controller state $q_g$,
whereas the physical safety requirement is imposed on the customer
voltage magnitude $v_i$. For any predefined tolerance $\delta\geq0$, nonnegativity
of $v_i$ implies that the requirement
$|v_i-\bar v|\leq\delta$ is equivalent to
\[
\bigl(\max\{0,\bar v-\delta\}\bigr)^2
\leq v_i^2\leq(\bar v+\delta)^2.
\]
Since $y_{\mathrm{ph},i}=\bar v^2-v_i^2$, we obtain
\begin{equation}
    \underline y
    \leq y_{{\rm ph},i}
    \leq \overline y,
    \label{eq:yinterval}
\end{equation}
where $\underline y:=-(2\bar v\delta+\delta^2)$ and
$\overline y:=\bar v^2-
\bigl(\max\{0,\bar v-\delta\}\bigr)^2$. Note that for $\delta\geq\bar v$, the lower bound on the voltage requirement automatically holds since 
$v_i\geq0$.

 Defining the matrix $C_y=[\,H\;0\;D_y\;I\;0\,]$,
such that $y_{\rm ph}=C_y\chi$, which yields $y_{{\rm ph},i}
    =e_i^\top C_y\chi$, we have that:
\begin{align}
    y_{\rm ph,i}
    &=
    (e^\top\chi)(e_i^\top C_y\chi)=
    \frac{1}{2}
    \chi^\top
    \He(e e_i^\top C_y)
    \chi. \nonumber
\end{align}
Define $B_i^+:=\dfrac12\He(e e_i^\top C_y) -\overline y\,ee^\top$, and
$B_i^-:=-\dfrac12\He(e e_i^\top C_y)
          +\underline y\,ee^\top.$
The upper bound $y_{\rm ph,i}\leq\overline y$
is then equivalent to $\chi^\top B_i^+\chi\leq0$,
while the lower bound $y_{\rm ph,i}\geq\underline y$
is equivalent to $\chi^\top B_i^-\chi\leq0$. Furthermore, let us define the vectors
$\lambda_{i,w}^{\pm}
:=\operatorname{col}(\lambda_{i,w,1}^{\pm},\ldots, \lambda_{i,w,N}^{\pm})$, and $
F_w(\chi):=\operatorname{col}(F_{w,1}(\chi),\ldots, F_{w,N}(\chi)).$
The following result then holds. 
\begin{proposition}
\label{prop:voltage-containment}
Consider the ellipsoid $\mathcal E(P)$, with
$P=P^\top\succ0$, and let $\delta\geq0$.
For every $i\in\mathcal N$, suppose there exist
nonnegative multipliers $\lambda_{i,b}^{\pm}$,
$\lambda_{i,a}^{\pm}$, and
$\lambda_{i,w,j}^{\pm}$, $j\in\mathcal N$, such that
\begin{equation}
B_i^+
\preceq
\lambda_{i,b}^{+}\Phi_b(P)
+\lambda_{i,a}^{+}\Phi_a
+\sum_{j=1}^{N}\lambda_{i,w,j}^{+}\Phi_{w,j},
\label{eq:voltage-upper-LMI}
\end{equation}
\begin{equation}
B_i^-
\preceq
\lambda_{i,b}^{-}\Phi_b(P)
+\lambda_{i,a}^{-}\Phi_a
+\sum_{j=1}^{N}\lambda_{i,w,j}^{-}\Phi_{w,j}.
\label{eq:voltage-lower-LMI}
\end{equation}
Then, for every $q_g\in\mathcal E(P)$ and every
 attack $a \in \mathcal{A}(q_g)$ and disturbance $w \in \mathcal{W}$, the physical
customer voltages $v_i$ satisfy
$\bar v-\delta\leq v_i\leq\bar v+\delta$
for all $i\in\mathcal N$.
\hfill $\square$
\end{proposition}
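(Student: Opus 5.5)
The argument is an S-procedure step of the same type as the proof of Proposition~\ref{prop:actuator-containment}. The only change is that the disturbance constraints $F_{w,j}$ now enter alongside the ellipsoid and attack constraints. We fix $i\in\mathcal N$, take any $q_g\in\mathcal E(P)$, $a\in\mathcal A(q_g)$ and $w\in\mathcal W$, and build the augmented vector $\chi=\col{q_g,\kappa,a,w,1}$ with $\kappa=K(Hq_g+w+Da)$. The $\kappa$ block is inert here: $C_y$, $\Phi_b$, $\Phi_a$ and $\Phi_{w,j}$ have zero columns in that block, so any value of $\kappa$ may be used.

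\emph{Step 1: sign information.} We first record the signs of the constraint functions at this $\chi$:
\begin{itemize}
\item $q_g\in\mathcal E(P)$ gives $F_b(\chi)=\chi^\top\Phi_b(P)\chi=q_g^\top Pq_g-1\leq0$;
\item by part (i) of the preceding lemma, $a\in\mathcal A(q_g)$ gives $\chi^\top\Phi_a\chi\leq0$;
\item by part (ii) of the same lemma, $w\in\mathcal W$ gives $\chi^\top\Phi_{w,j}\chi\leq0$ for every $j\in\mathcal N$.
\end{itemize}

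\emph{Step 2: upper bound.} Pre- and post-multiply \eqref{eq:voltage-upper-LMI} by $\chi^\top$ and $\chi$. This gives
\[
\chi^\top B_i^+\chi\leq \lambda_{i,b}^+F_b(\chi)+\lambda_{i,a}^+F_a(\chi)+\sum_{j=1}^N\lambda_{i,w,j}^+F_{w,j}(\chi).
\]
The multipliers are nonnegative and each $F$ term is nonpositive by Step~1, so the right-hand side is $\leq0$. Hence $\chi^\top B_i^+\chi\leq0$.

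We then check that this quadratic form equals $y_{{\rm ph},i}-\overline y$. Since $e^\top\chi=1$, we have $\chi^\top\tfrac12\He(ee_i^\top C_y)\chi=(e^\top\chi)(e_i^\top C_y\chi)=y_{{\rm ph},i}$ and $\chi^\top ee^\top\chi=1$. So $y_{{\rm ph},i}\leq\overline y$.

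\emph{Step 3: lower bound.} Repeating Step~2 with \eqref{eq:voltage-lower-LMI} gives $\chi^\top B_i^-\chi=\underline y-y_{{\rm ph},i}\leq0$. So $y_{{\rm ph},i}\geq\underline y$, and \eqref{eq:yinterval} holds for every $i$.

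\emph{Step 4: back to voltages.} We translate \eqref{eq:yinterval} into a statement on $v_i$. Because $y_{{\rm ph},i}=\bar v^2-v_i^2$, the interval gives
\[
\bigl(\max\{0,\bar v-\delta\}\bigr)^2\leq v_i^2\leq(\bar v+\delta)^2.
\]
Using $v_i\geq0$ and monotonicity of the square root on $\mathbb R_{\geq0}$, this becomes $\max\{0,\bar v-\delta\}\leq v_i\leq\bar v+\delta$, which implies $\bar v-\delta\leq v_i\leq\bar v+\delta$.

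\emph{Main point of care.} The most delicate step is the last one. We must use the fact that $y_{{\rm ph},i}=\bar v^2-v_i^2$ is the physical squared deviation, including the actuator attack through $D_ya$ but excluding the sensor attack. We must also handle $\delta\geq\bar v$: there $\overline y=\bar v^2$ and the lower voltage bound is vacuous. The matrix steps are routine. We stress that no invariance is needed here: the claim is pointwise over $\mathcal E(P)\times\mathcal A(q_g)\times\mathcal W$. It becomes a trajectory-wide guarantee only when combined with Proposition~\ref{prop:RPI}.
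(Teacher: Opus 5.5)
Your proposal is correct and follows the paper's proof essentially step for step. You use the sign information $F_b,F_a,F_{w,j}\leq0$ on $\mathcal E(P)\times\mathcal A(q_g)\times\mathcal W$, apply the S-procedure with nonnegative multipliers to get $\chi^\top B_i^\pm\chi\leq0$, and translate $\underline y\leq y_{\mathrm{ph},i}\leq\overline y$ back to voltage bounds; your explicit check that $\chi^\top B_i^+\chi=y_{\mathrm{ph},i}-\overline y$ via $e^\top\chi=1$, and the square-root step using $v_i\geq0$, only spell out details the paper leaves implicit.
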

\emph{Proof:}
For every $q_g\in\mathcal E(P)$ and every admissible
attack and disturbance, it holds that
$F_b(\chi)\leq0$, $F_a(\chi)\leq0$, and
$F_{w,j}(\chi)\leq0$ for all $j\in\mathcal N$.
By the definitions of $B_i^\pm$,
\eqref{eq:voltage-upper-LMI} and
\eqref{eq:voltage-lower-LMI} imply
\[
\left\{
\begin{aligned}
\chi^\top B_i^+\chi
&\leq \lambda_{i,b}^{+}F_b(\chi)
+\lambda_{i,a}^{+}F_a(\chi)
+(\lambda_{i,w}^{+})^\top F_w(\chi)\leq0,\\
\chi^\top B_i^-\chi
&\leq \lambda_{i,b}^{-}F_b(\chi)
+\lambda_{i,a}^{-}F_a(\chi)
+(\lambda_{i,w}^{-})^\top F_w(\chi)\leq0,
\end{aligned}
\right.
\]
where the second inequalities follow from the
nonnegativity of the multipliers.
Thus, it holds that $\underline y\leq y_{\rm ph,i}\leq\overline y$.
By the equivalence in \eqref{eq:yinterval}, it holds that
$\bar v-\delta\leq v_i\leq\bar v+\delta$,
which concludes the proof.
\hfill $\blacksquare$

Proposition \ref{prop:voltage-containment} bounds the physical voltage for
every state $q_g$ in the candidate ellipsoid $\mathcal{E}(P)$ and every
admissible attack and disturbance. Combined with
robust positive invariance of the same ellipsoid,
it guarantees satisfaction of the voltage limits
for all time.
\subsection{Safety Certificate under Attacks}
Propositions~1-3 establish, respectively,  
robust positive invariance, satisfaction of inverter capacity constraints,
and physical voltage safety.
The following theorem combines these results for
a common ellipsoid, yielding a safety certificate
for all admissible trajectories initialized in it.
\begin{theorem}
\label{thm:main-safety}
Under Assumptions 1-4, let $\delta\geq0$
and consider a capacity limit $\bar Q_i >0$ for every $i\in\mathcal N$.
Suppose there exists a common matrix
$P=P^\top\succ0$ satisfying
\eqref{eq:attack-nonempty} such that the following
conditions are satisfied:
\begin{enumerate}
\item  There exist \(V=\operatorname{diag}(\nu_1,\ldots,\nu_N)\succ0\),
    \(\alpha_1\in\mathbb R\), \(\alpha_2\ge0\), and
    \(\alpha_{w,j}\ge0\), \(j\in\mathcal N\), such that \eqref{eq:rpi-condition} holds;
\item For all $i$, there exist
    $\beta_{i,b},\beta_{i,a}\ge0$ such that \eqref{eq:actuator-containment} holds;
\item  
For every $i\in\mathcal N$, there exist nonnegative
multipliers $\lambda_{i,b}^{\pm}$,
$\lambda_{i,a}^{\pm}$, and
$\lambda_{i,w,j}^{\pm}$ for every $j\in\mathcal N$,
such that \eqref{eq:voltage-upper-LMI} and
\eqref{eq:voltage-lower-LMI} hold;
    \end{enumerate}
Then, every
trajectory \(q_g(\cdot)\) under the dynamics \eqref{eq:attacked-state}-\eqref{eq:attacked-output} with \(q_g(0)\in\mathcal E(P)\),
generated by disturbances satisfying $w(t)\in\mathcal W$, and attacks satisfying 
$a(t)\in\mathcal A(q_g(t))$, \text{for all} $t\ge 0$,
satisfies the following:
\begin{enumerate}[label=(\roman*)]
\item  
$q_g(t)\in\mathcal E(P)$, for all \(t\ge0\);
\item $|\tilde u_i(t)|\le\bar Q_i$, for all \(t\ge0\) and all \(i\in\mathcal N\);
\item $\bar v-\delta\le v_i(t)\le\bar v+\delta$ for all \(t\ge0\) .
\end{enumerate}
\hfill \(\square\)
\end{theorem}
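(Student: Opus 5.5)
The theorem follows by chaining Propositions~\ref{prop:RPI}, \ref{prop:actuator-containment} and \ref{prop:voltage-containment}, all applied to the same matrix $P$. The logic is that invariance comes first, and the two containment results then transfer pointwise in time.

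\emph{Step 1 (invariance).} Condition~1, together with \eqref{eq:attack-nonempty} and Assumptions~1--4, is exactly the hypothesis of Proposition~\ref{prop:RPI}. Hence $\mathcal E(P)$ is RPI, and every admissible trajectory with $q_g(0)\in\mathcal E(P)$ satisfies $q_g(t)\in\mathcal E(P)$ for all $t\ge0$. This gives item~(i).

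Proposition~\ref{prop:RPI} only shows $\dot V_q\le0$ on $\partial\mathcal E(P)$. Turning this into genuine invariance needs a tangency-type argument: absolutely continuous trajectories, measurable $w,a$, and possibly the fact that the nonexpansive-boundary condition suffices for closed sets (Nagumo). I would take Proposition~\ref{prop:RPI} as given, as the statement allows, and note that condition~\eqref{eq:attack-nonempty} keeps the attack set nonempty along the trajectory. That makes the admissibility requirement $a(t)\in\mathcal A(q_g(t))$ consistent.

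\emph{Step 2 (pointwise transfer).} Fix any $t\ge0$ and form $\chi(t)=\col{q_g(t),\kappa(t),a(t),w(t),1}$. By Step~1, $q_g(t)\in\mathcal E(P)$. By assumption, $a(t)\in\mathcal A(q_g(t))$ and $w(t)\in\mathcal W$.

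For item~(ii), apply Proposition~\ref{prop:actuator-containment} using Condition~2. Since the multipliers $\beta_{i,b},\beta_{i,a}$ are nonnegative and $F_b(\chi(t))\le0$, $F_a(\chi(t))\le0$, we get $F_{u,i}(\chi(t))\le0$. This means $|\tilde u_i(t)|\le\bar Q_i$ for every $i\in\mathcal N$.

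For item~(iii), apply Proposition~\ref{prop:voltage-containment} using Condition~3. Multiplying the matrix inequalities \eqref{eq:voltage-upper-LMI}--\eqref{eq:voltage-lower-LMI} by $\chi(t)$ on both sides gives $\underline y\le y_{\mathrm{ph},i}(t)\le\overline y$. The equivalence \eqref{eq:yinterval} then yields $\bar v-\delta\le v_i(t)\le\bar v+\delta$, using $v_i\ge0$.

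One thing to state explicitly is that \eqref{eq:actuator-containment}, required for all $\chi$, is equivalent to the matrix inequality $\Phi_{u,i}\preceq\beta_{i,b}\Phi_b(P)+\beta_{i,a}\Phi_a$. Either form suffices here, since we evaluate at the specific vector $\chi(t)$.

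\emph{Step 3 (conclusion).} Since $t$ was arbitrary, items (ii) and (iii) hold for all $t\ge0$. Items (i)--(iii) hold simultaneously because the same $P$ is used in all three conditions.

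I expect no substantive obstacle. The only point requiring care is Step~1, making sure the containment propositions are invoked at points that are actually inside $\mathcal E(P)$. Everything else is the nonnegative-multiplier ($\mathcal S$-procedure) implication already established in the propositions.
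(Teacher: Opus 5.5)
Your proposal is correct and matches the paper's proof, which likewise obtains (i) from Proposition~\ref{prop:RPI} and then applies Propositions~\ref{prop:actuator-containment} and~\ref{prop:voltage-containment} pointwise in time, using that the state stays in $\mathcal E(P)$. Your Nagumo-type concern can be sidestepped: \eqref{eq:rpi-condition} holds for every $\chi$, and $F_a,F_s,F_{w,j}\le0$ along every admissible trajectory (not only on $\partial\mathcal E(P)$), so $\dot V_q\le-\alpha_1(V_q-1)$ almost everywhere, and a comparison argument gives $e^{\alpha_1 t}(V_q(t)-1)\le V_q(0)-1\le0$ for any real $\alpha_1$.
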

\emph{Proof:}
Proposition~\ref{prop:RPI} implies that
$q_g(0)\in\mathcal E(P)$ yields
$q_g(t)\in\mathcal E(P)$ for all $t\ge0$ under every
admissible attack and disturbance. Since the state therefore
remains in $\mathcal E(P)$, Proposition~
\ref{prop:actuator-containment} guarantees
$|\tilde u_i(t)|\le\bar Q_i$, while Proposition~
\ref{prop:voltage-containment} guarantees
$\bar v-\delta\le v_i(t)\le\bar v+\delta$. \hfill $\blacksquare$

The certificate conditions are LMIs for fixed values of the
scalar multipliers multiplying $\Phi_b(P)$. As such, joint optimization
of these multipliers and $P$ introduces bilinear terms such as
$\alpha_1P$, $\beta_{i,b}P$, and
$\lambda_{i,b}^{\pm}P$. The next subsection normalizes the capacity and
voltage inequalities so that only $\alpha_1$
is handled by an outer search.

\subsection{Linear Matrix Inequality Feasibility Problem}
To obtain a formulation that is easier to calculate, we restrict attention to certificates with
$\beta_{i,b}>0$ and $\lambda_{i,b}^{\pm}>0$ for all $i \in \mathcal{N}
$.  Under this restriction, for each fixed $\alpha_1=a_1\in\mathbb{R}$, the conditions of
Theorem~1 can be reformulated as an LMI feasibility problem. Specifically, we introduce the variables $t_i := \dfrac{1}{\beta_{i,b}}$,
$s_i := \dfrac{\beta_{i,a}}{\beta_{i,b}}$,
and $t_i^\pm := \dfrac{1}{\lambda_{i,b}^\pm}$,
$s_{i,a}^\pm := \dfrac{\lambda_{i,a}^\pm}{\lambda_{i,b}^\pm}$,
$s_{i,w,j}^\pm :=
\dfrac{\lambda_{i,w,j}^\pm}{\lambda_{i,b}^\pm}$. For compactness, let us now define
$\boldsymbol{\alpha}_w
:=\operatorname{col}(\alpha_{w,1},\ldots,\alpha_{w,N})$,
$\boldsymbol t:=\operatorname{col}(t_1,\ldots,t_N)$, $\boldsymbol s:=\operatorname{col}(s_1,\ldots,s_N)$,
$\boldsymbol t^\pm:=\operatorname{col}(t_1^\pm,\ldots,t_N^\pm)$,$\boldsymbol s_a^\pm :=\operatorname{col}(s_{1,a}^\pm,\ldots,s_{N,a}^\pm)$,
$\boldsymbol s_{i,w}^\pm :=\operatorname{col}(s_{i,w,1}^\pm,\ldots,s_{i,w,N}^\pm).$
For any $z\in\mathbb R^N$, let
$\mathcal M_w(z):=\sum_{j=1}^{N}z_j\Phi_{w,j}.$
All vector inequalities below are interpreted
componentwise, and indexed constraints hold for
every $i\in\mathcal N$. Since $\beta_{i,b}>0$ and $\lambda_{i,b}^\pm>0$,
dividing the corresponding matrix inequalities by these
multipliers preserves their definiteness and yields the
following equivalent feasibility problem:
\begin{equation}
\left\{
\begin{aligned}
\text{Find}\;&
P,\;V,\;\alpha_2,\;\boldsymbol{\alpha}_w,\;
\boldsymbol t,\;\boldsymbol s, \boldsymbol t^\pm,\;\boldsymbol s_a^\pm,\;
\{\boldsymbol s_{i,w}^\pm\}_{i\in\mathcal N}
\\[1mm]
\text{s.t.}\;&
P=P^\top\succ0, V=\operatorname{diag}(\nu_1,\ldots,\nu_N)\succ0,\\
&\alpha_2\geq0,\quad \boldsymbol{\alpha}_w\geq0,  \boldsymbol  t>0,\quad \boldsymbol t^\pm>0,\\
&\boldsymbol s\geq0,\quad \boldsymbol s_a^\pm\geq0,\quad
  \boldsymbol s_{i,w}^\pm\geq0,\\
&Q_1-Q_2Q_3^{-1}Q_2^\top\preceq P,\\
&\Phi_0(P)+a_1\Phi_b(P)-\alpha_2\Phi_a-\Phi_s(V)-\mathcal M_w(\boldsymbol{\alpha}_w)
  \preceq0,\\
&t_i\Phi_{u,i}\preceq\Phi_b(P)+s_i\Phi_a,\\
&t_i^\pm B_i^\pm\preceq
  \Phi_b(P)+s_{i,a}^\pm\Phi_a+\mathcal M_w(\boldsymbol s_{i,w}^\pm).
\end{aligned}
\right.
\label{eq:lmi-feasibility}
\end{equation}
The original multipliers in Theorem~1 are recovered as $\beta_{i,b}=\dfrac{1}{t_i}$,
$\beta_{i,a}=\dfrac{s_i}{t_i}$,
and $\lambda_{i,b}^\pm=\dfrac{1}{t_i^\pm}$,
$\lambda_{i,a}^\pm=\dfrac{s_{i,a}^\pm}{t_i^\pm}$,
$\lambda_{i,w,j}^\pm=\dfrac{s_{i,w,j}^\pm}{t_i^\pm}$.
Thus, only $a_1$ must be now tuned
externally. The remaining multipliers are found by solving the
LMI feasibility problem~\eqref{eq:lmi-feasibility}.

\section{Numerical Simulations}
We illustrate the safety verification framework on the five-customer low voltage radial distribution network considered in \cite{chong2019}, based on the CIGRE benchmark
in \cite{strunz2009}. The model parameters are listed in \autoref{tab:system_parameters}. We set the nominal voltage
to $\bar v=230\,\mathrm{V}$ and the safety tolerance to
$\delta=11.5\,\mathrm{V}$, corresponding to a $\pm5\%$
voltage margin. Each customer employs the piecewise droop
function in \eqref{eq:droop_piecewise}, which satisfies
Assumption~\ref{assum:sector-bounded}, with saturation
limits $\bar K_i=\bar Q_i$ matching the inverter
reactive power capacities. The inverter time constants are set to $\tau_i=1\,\mathrm{s}$
for all $i\in\mathcal N$. The inverter reactive power limits in Table~I are computed
as $\bar Q_i=\sqrt{\bar s_i^2-\rho_{g,i}^2}$, using
$\bar s=[4200 \ 6500 \ 4700 \ 5300\ 3600]^\top\,\mathrm{VA}$
and
$\rho_g=[3500 \ 5500 \ 4000 \ 4500 \ 3000]^\top\,\mathrm{W}$.  We set $y_{m,i}=y_{n,i}=0$ and
$y_{\max,i}=-y_{\min,i}=14899.4\,\mathrm{V}^2$.
Lemma~2 then applies with $S_1=0$ and
$S_2=\operatorname{diag}(d_1,\ldots,d_N)$,
where $d_i=\bar Q_i/y_{\max,i}$. We consider an operating envelope satisfying
Assumption~\ref{ass:prosumer_bounds}, with the net active
power and reactive power consumption bounds listed in
Table~I, and take $\Delta_0=5\,\mathrm{V}$ in Assumption~3.
Applying Corollary~1 gives
$\bar w=10^3[2.6353\ 2.9346\ 2.9388\ 3.1131\ 3.1186]^\top
\,\mathrm{V}^2$. We consider coordinated attacks $a=\Gamma q_g+m$, with
$\Gamma=\operatorname{col}(\Gamma_y,\Gamma_u)$,
$\Gamma_y=-0.5I_5$ in $\mathrm{V}^2/\mathrm{VAr}$, and
$\Gamma_u=-0.5I_5$. The independent component satisfies
$m^\top Q_m m\leq 1$, where
\[
Q_m=\operatorname{diag}\!\left(
\bar m_{y,1}^{-2},\ldots,\bar m_{y,N}^{-2},
\bar m_{u,1}^{-2},\ldots,\bar m_{u,N}^{-2}
\right),
\]
with $\bar m_{y,i}=1000\,\mathrm{V}^2$ and
$\bar m_{u,i}=0.7\bar Q_i$. Thus, Assumption~4 holds with $Q_1=\Gamma^\top Q_m\Gamma$,
$Q_2=-\Gamma^\top Q_m$,
$Q_3=Q_m$.
The actuator feedback gives $\tilde u=0.5q_g+m_u$.
For the simulation study, we consider the 
disturbance $w_i(t)=\bar w_i\sin(2\pi\cdot0.1t+\varphi_i)$,
where $\varphi_i=i\,\mathrm{rad}$ for each $i\in\mathcal N$, and choose the attack bias to be
$m_{y,i}(t)=\dfrac{\bar m_{y,i}}{\sqrt{5}}\sin(2\pi\cdot0.4t)$
and
$m_{u,i}(t)=\dfrac{\bar m_{u,i}}{\sqrt{5}}\cos(2\pi\cdot0.4t)$ for each customer $i\in\mathcal N$.
Note that these signals satisfy $w(t)\in\mathcal W$ and
$m(t)^\top Q_m m(t)=1$ for all $t\geq0$, where the latter implies that the attack signal satisfies Assumption \ref{ass:attack_set}. 

\begin{table}[b]
\centering
\renewcommand{\arraystretch}{1.2}
\caption{Network Parameters and Bounds}
\begin{tabular}{|c|c|c|c|c|c|}
\hline
$i$ & $1$ & $2$ & $3$ & $4$ & $5$ \\
\hline
$R_{i-1}\,[\Omega]$
& 0.00343 & 0.00172 & 0.00343 & 0.00515 & 0.00172 \\
\hline
$X_{i-1}\,[\Omega]$
& 0.04711 & 0.02356 & 0.04711 & 0.07067 & 0.02356  \\
\hline
$R'_{i-1}\,[\Omega]$
& 0.00147 & 0.00662 & 0.00147 & 0.00147 & 0.00147 \\
\hline
$X'_{i-1}\,[\Omega]$
& 0.02157 & 0.09707 & 0.02157 & 0.02157 & 0.02157 \\
\hline
$\bar\rho_i\,[\mathrm{W}]$
& 1205 & 60 & 1440 & 2205 & 280 \\
\hline
$\bar q_{c,i}\,[\mathrm{VAr}]$
& 300 & 960 & 480 & 600 & 400 \\
\hline
$\bar Q_i\,[\mathrm{VAr}]$
& 2321.6 & 3464.1 & 2467.8 & 2800 & 1990 \\
\hline
\end{tabular}
\vspace{0.2cm}
\label{tab:system_parameters}
\end{table}

\begin{figure}[h]
    \centering
    \includegraphics[width=0.82\linewidth]{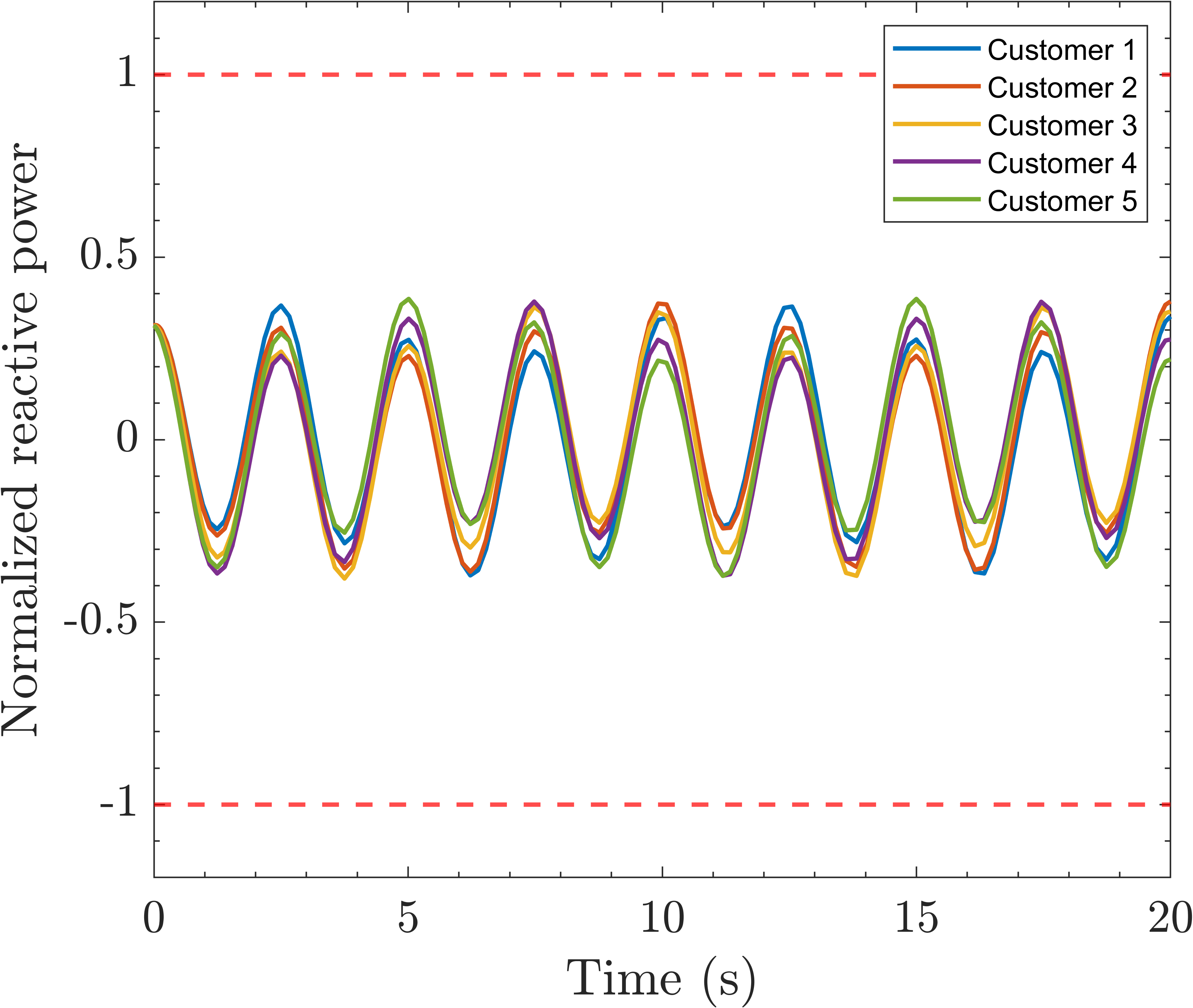}
    \caption{Normalized reactive power injections after actuator
corruption. Dashed lines indicate the inverter capacity limits.}
    \label{fig:normalized_u_tilde}
\end{figure}
\begin{figure}[h]
    \centering
    \includegraphics[width=0.8\linewidth]{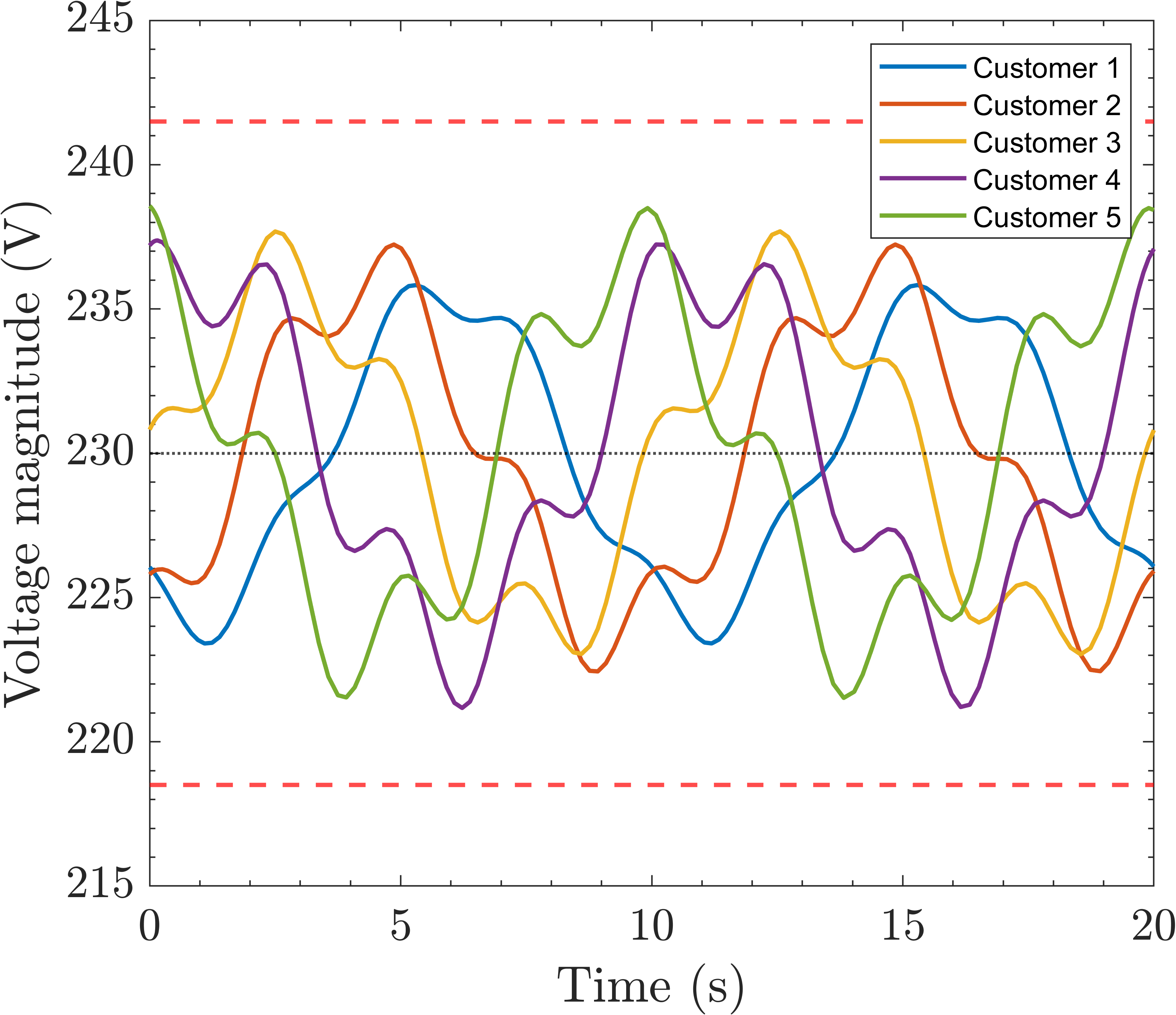}
    \caption{Customer voltage magnitudes under coordinated attacks.
Dashed lines indicate the predetermined $\pm5\%$ voltage limits.}
    \label{fig:voltages_attack}
\end{figure}

A feasible certificate is obtained at $a_1 \approx 1.429$,
selected through a grid search. The returned solution is then confirmed against the original
certificate inequalities.
Figures~3 and~4 illustrate the system response over a time interval $t\in[0,20]\,\mathrm{s}$ with initial conditions $q_g(0)=0$. 
Figure~3 shows the normalized reactive power injections
$\tilde u_i/\bar Q_i=(q_{g,i}+a_{u,i})/\bar Q_i$. Note that
all normalized reactive power injections remain within $[-1,1]$, demonstrating that
the inverter capacities are respected even after actuator
corruption. This illustrates the role of the capacity
containment condition, i.e., bounding the controller state alone would not otherwise  ensure that the attacked injections remain feasible. Figure~4 shows that all customer voltage
magnitudes remain within the predetermined interval
$[218.5,241.5]\,\mathrm{V}$ despite the coordinated sensor
and actuator attacks in accordance to Theorem \ref{thm:main-safety}.

\section{Concluding Remarks}
We developed a safety verification framework for
droop-controlled radial distribution feeders under simultaneous sensor and actuator attacks.
The certificate accounts for attacks and disturbances entering the nonlinear droop control and jointly ensures
reactive power state invariance, customer voltage safety and inverter capacity
compliance. Future work will investigate the design of a secondary
controller that allows
to maintain or improve voltage and inverter capacity limits under
a broader set of admissible attacks and disturbances.

\bibliographystyle{IEEEtran}
\bibliography{references}

\end{document}